\pdfoutput=1
\PassOptionsToPackage{hyphens}{url}
\documentclass[a4paper,UKenglish,cleveref,autoref]{lipics-v2021}

\hideLIPIcs
\nolinenumbers

\usepackage{booktabs}
\usepackage{colortbl}
\usepackage{longtable}
\usepackage{float}
\usepackage{microtype}
\usepackage{needspace}
\usepackage{tabularx}
\usepackage{xurl}
\makeatletter
\let\floatlib@headings\ps@headings
\def\ps@headings{\floatlib@headings
  \let\@oddfoot\@empty\let\@evenfoot\@empty}
\makeatother
\usepackage{tikz}
\usetikzlibrary{positioning,arrows.meta,fit,backgrounds,calc}

\lstdefinelanguage{Lean}{
  morekeywords={theorem,lemma,def,example,by,import,open,namespace,end,structure,inductive,where,fun,let,have,show,from,at,in,using,with,match,if,then,else,do,return,instance,class,abbrev,variable,section,noncomputable,private,protected,public,module,deriving,partial,termination_by,decreasing_by,set_option,calc},
  sensitive=true,
  morecomment=[l]{--},
  morecomment=[s]{/-}{-/},
  morestring=[b]",
  literate=
    {ℝ}{{$\mathbb{R}$}}1 {ℂ}{{$\mathbb{C}$}}1 {ℕ}{{$\mathbb{N}$}}1 {ℤ}{{$\mathbb{Z}$}}1 {ℚ}{{$\mathbb{Q}$}}1
    {∫}{{$\int$}}1 {∀}{{$\forall$}}1 {∃}{{$\exists$}}1 {∈}{{$\in$}}1 {∉}{{$\notin$}}1 {≤}{{$\leq$}}1 {≥}{{$\geq$}}1 {•}{{$\bullet$}}1 {≠}{{$\neq$}}1
    {↦}{{$\mapsto$}}1 {→}{{$\to$}}1 {←}{{$\leftarrow$}}1 {↔}{{$\leftrightarrow$}}1 {∧}{{$\wedge$}}1 {∨}{{$\vee$}}1 {¬}{{$\neg$}}1
    {⊢}{{$\vdash$}}1 {∑}{{$\sum$}}1 {∏}{{$\prod$}}1 {∂}{{$\partial$}}1 {∇}{{$\nabla$}}1 {Δ}{{$\Delta$}}1 {π}{{$\pi$}}1 {μ}{{$\mu$}}1
    {α}{{$\alpha$}}1 {β}{{$\beta$}}1 {γ}{{$\gamma$}}1 {δ}{{$\delta$}}1 {ε}{{$\varepsilon$}}1 {λ}{{$\lambda$}}1 {σ}{{$\sigma$}}1 {τ}{{$\tau$}}1 {ω}{{$\omega$}}1 {Ω}{{$\Omega$}}1 {φ}{{$\varphi$}}1 {ψ}{{$\psi$}}1 {θ}{{$\theta$}}1 {ρ}{{$\rho$}}1 {ξ}{{$\xi$}}1 {η}{{$\eta$}}1 {ζ}{{$\zeta$}}1 {κ}{{$\kappa$}}1 {ι}{{$\iota$}}1
    {⁻¹}{{$^{-1}$}}2 {²}{{$^{2}$}}1 {³}{{$^{3}$}}1 {₀}{{$_{0}$}}1 {₁}{{$_{1}$}}1 {₂}{{$_{2}$}}1 {ₙ}{{$_{n}$}}1 {ₖ}{{$_{k}$}}1 {ᵢ}{{$_{i}$}}1
    {≥}{{$\geq$}}1 {⬝}{{$\cdot$}}1 {≈}{{$\approx$}}1 {≡}{{$\equiv$}}1 {⁴}{{$^{4}$}}1 {⁵}{{$^{5}$}}1 {₃}{{$_{3}$}}1 {₄}{{$_{4}$}}1
    {⌊}{{$\lfloor$}}1 {⌋}{{$\rfloor$}}1 {⌈}{{$\lceil$}}1 {⌉}{{$\rceil$}}1 {∣}{{$\mid$}}1 {↑}{{$\uparrow$}}1 {⊤}{{$\top$}}1 {⊥}{{$\bot$}}1 {ℝ^n}{{$\mathbb{R}^n$}}2 {∂ₓ}{{$\partial_x$}}2 {∂ᵣ}{{$\partial_r$}}2 {ᵣ}{{$_{r}$}}1 {ₛ}{{$_{s}$}}1 {ₜ}{{$_{t}$}}1 {ₘ}{{$_{m}$}}1 {ⱼ}{{$_{j}$}}1 {Σ}{{$\Sigma$}}1 {Π}{{$\Pi$}}1 {Γ}{{$\Gamma$}}1 {Λ}{{$\Lambda$}}1 {ν}{{$\nu$}}1 {χ}{{$\chi$}}1 {υ}{{$\upsilon$}}1 {ϕ}{{$\phi$}}1 {ϵ}{{$\epsilon$}}1 {≪}{{$\ll$}}1 {≫}{{$\gg$}}1 {∼}{{$\sim$}}1 {–}{{--}}1 {—}{{---}}1 {’}{{'}}1 {“}{{``}}1 {”}{{''}}1 {…}{{$\ldots$}}1
    {⟨}{{$\langle$}}1 {⟩}{{$\rangle$}}1 {∘}{{$\circ$}}1 {·}{{$\cdot$}}1 {×}{{$\times$}}1 {√}{{$\surd$}}1 {∞}{{$\infty$}}1 {⋯}{{$\cdots$}}1 {ℓ}{{$\ell$}}1 {‖}{{$\|$}}1 {∅}{{$\emptyset$}}1 {⊆}{{$\subseteq$}}1 {∩}{{$\cap$}}1 {∪}{{$\cup$}}1 {⁺}{{$^{+}$}}1 {ᶜ}{{$^{c}$}}1 {∂ₜ}{{$\partial_t$}}2 {ᵀ}{{$^{T}$}}1 {▸}{{$\triangleright$}}1
}
\definecolor{leankw}{RGB}{90,40,140}
\definecolor{leancomment}{RGB}{90,110,90}
\definecolor{leanstring}{RGB}{140,60,20}
\definecolor{theoremblue}{RGB}{35,73,112}
\makeatletter
\thm@headfont{%
  \color{theoremblue}$\blacktriangleright$\nobreakspace\sffamily\bfseries}
\makeatother

\newcommand{\lean}[1]{\texttt{#1}}

\newcommand{\R}{\mathbb{R}}
\newcommand{\Q}{\mathbb{Q}}
\newcommand{\N}{\mathbb{N}}
\newcommand{\Z}{\mathbb{Z}}
\newcommand{\ulp}{\operatorname{ulp}}

\newcommand{\para}[1]{\smallskip\noindent\textit{#1}\hspace{0.6em}}

\title{FloatLib: Verified Floating-Point Arithmetic in Lean}
\titlerunning{FloatLib}

\author{Robert Joseph George\footnote{These authors contributed equally to this work.}}{California Institute of Technology, Pasadena, CA, USA}{}{}{}
\author{Will Adkisson\footnotemark[1]}{Washington University in St.\ Louis, St.\ Louis, MO, USA}{}{}{}
\author{Anima Anandkumar}{California Institute of Technology, Pasadena, CA, USA}{}{}{}

\authorrunning{R.\,J. George, W. Adkisson, and A. Anandkumar}
\Copyright{Robert Joseph George, Will Adkisson, and Anima Anandkumar}

\ccsdesc[500]{Theory of computation~Logic and verification}
\ccsdesc[300]{Mathematics of computing~Numerical analysis}
\ccsdesc[300]{Software and its engineering~Formal software verification}

\keywords{Lean, floating-point arithmetic, IEEE 754, posits, verified software, certified execution}

\category{}
\relatedversion{}
\supplementdetails{Software}{https://github.com/lean-dojo/FloatLib}

\EventEditors{}
\EventNoEds{0}
\EventLongTitle{}
\EventShortTitle{}
\EventAcronym{}
\EventYear{2026}
\EventDate{}
\EventLocation{}
\EventLogo{}
\SeriesVolume{}
\ArticleNo{}

\begin{document}
\begingroup
\renewcommand{\LARGE}{\fontsize{16}{19}\selectfont}
\maketitle
\endgroup
\hypersetup{
  pdfauthor={Robert Joseph George, Will Adkisson, Anima Anandkumar},
  pdfsubject={FloatLib preprint}
}

\begin{abstract}
We present FloatLib, a verified arbitrary-precision floating-point
arithmetic library in Lean~4 that combines broad format coverage,
machine-checked correctness, and efficient certified execution. To our
knowledge, FloatLib is the first Lean library to unify IEEE binary and
decimal arithmetic, arbitrary-width posits, P3109, and user-defined formats
and rounding rules behind interchangeable certified software backends. Every
certified backend is proved equal to a complete encoded specification,
preserving signed zeros and exceptional values, while numerical theorems
connect execution to real rounding, error bounds, and exactness. FloatLib
combines exhaustive certified tables for small formats with verified word
and limb kernels based on guard-and-sticky invariants and independently
checked quotient candidates. Its posit development additionally proves
standard rounding thresholds and exact quire accumulation within capacity
for arbitrary widths. Across matched workloads, FloatLib achieves speedups
of up to $1.46\times$ over FLoPS and $116\times$ over Universal, while
remaining slower in some regimes such as binary arithmetic against MPFR.
Independent conformance testing includes more than 102 million TestFloat
evaluations with zero differences under the tested relation. We release the
library, proofs, benchmarks, evaluation data, and guide as open source.
\end{abstract}

\section{Introduction}\label{sec:intro}

Large language models now generate programs from problem
descriptions~\cite{li2022alphacode} and construct formal proofs in
Lean~\cite{ren2025deepseekproverv2}. For numerical programs, the statement
being proved must account for the arithmetic that executes. A proof of
a real-valued identity does not by itself describe an implementation
that loses small terms, overflows, or rounds at different points in the
computation. These effects cause failures in established software:
DeepStability documents numerical bugs in PyTorch and
TensorFlow~\cite{kloberdanz2022deepstability}. Verification therefore
needs specifications of the arithmetic choices and theorems relating
executed operations to their numerical meaning.

These choices are especially visible in mixed-precision machine learning.
Storing weights, activations, and gradients in fewer bits reduces memory
use and data movement, and can increase arithmetic throughput on hardware
that supports those formats~\cite{micikevicius2018mixedprecision}.
Small gradients can underflow to zero, however, and a nonzero update can
be lost when added to a much larger weight. Loss scaling, wider
accumulators, and higher-precision weight updates preserve information
through these computations~\cite{micikevicius2018mixedprecision}.
FP8 formats make different choices about exponent range, significand
precision, and exceptional values; microscaling adds a shared scale to
blocks of values~\cite{micikevicius2022fp8,ocp-mx-2023}.
The dot product shows why the accumulator also belongs in the
specification. Separate multiplications and additions round twice at
each step; a fused multiply-add rounds once~\cite{ieee754-2019}.
A posit quire can accumulate finite products exactly within its capacity
and round only the final sum~\cite{posit2022standard}.
These implementations share a real-valued formula but have different
rounding errors and exactness properties.

Numerical solvers likewise use several precisions within one algorithm.
Iterative refinement can factor a matrix at low precision and compute
residuals at higher precision, concentrating expensive arithmetic where
it improves the answer. Its error analysis identifies the precision and
conditioning assumptions under which the iterations recover an accurate
solution~\cite{carson2018threeprecisions}.
Rigorous interval computation requires each computed interval to enclose
the exact value. CoqInterval and ValidSDP use floating-point enclosures
and rounding-error bounds in machine-checked mathematical
proofs~\cite{martindorel2023enabling}.
Such analyses need arithmetic contracts that state their rounding,
range, and capacity assumptions explicitly.

Repeated scalar operations also make speed a design constraint:
allocation, conversion, and wide intermediate arithmetic recur inside
reductions and solver iterations. GMP addresses these costs through
algorithms chosen by operand size, machine-word arithmetic, and optimized
assembly kernels~\cite{gmp}. MPFR builds on GMP to provide efficient
arbitrary-precision arithmetic with correctly rounded
results~\cite{fousse2007mpfr}. Verified software must account for these
costs while preserving its numerical guarantees. Software implementations
also allow experiments with formats that have no dedicated hardware.
The cost matters within numerical proofs themselves: the CoqInterval and
ValidSDP experiments show how faster arithmetic can reduce whole-proof
checking time~\cite{martindorel2023enabling}.

To address these requirements, we developed \textbf{FloatLib, a verified
arbitrary-precision floating-point arithmetic library in Lean~4.}
Its scalar contracts cover quantization, mixed-precision operations, and
accumulation in verified machine-learning programs, with access to the
mathematical analysis available in mathlib~\cite{mathlib2020}.
Users can configure precision,
exponent widths, bias, and exceptional encodings, or define a
representation beyond a radix-and-exponent layout. The interfaces also
admit custom rounding rules and shared-scale contexts, covering small
quantized formats and wide reference formats within the same library.
Encoded specifications describe the computed values; refinement theorems
equate implementations with these specifications, and numerical theorems
connect the results to real arithmetic. Rounding bounds and exactness
results can then be used in loop invariants and accumulation arguments.
The custom-format affine proof applies the binary rounding bound, while
the quire theorem establishes exactness of the complete accumulation loop.

The mathematical starting point is Flocq's generic formats, rounding
theory, and verified binary arithmetic in
Coq~\cite{boldo2011flocq,boldo2017flocq}.
In Lean, FloatSpec develops Flocq-style theory and IEEE
models~\cite{floatspec2026}, while FLoPS formalizes P3109 semantics and
rounding, with executable arithmetic and refinement
proofs~\cite{chang2026flops,flops2026artifact}.
FloatLib combines extensible numerical representations with cost-based
selection among certified software implementations
(\Cref{tab:related-overview}). A rounding theorem can therefore remain
independent of the table, word kernel, or limb algorithm selected to
implement the operation.

Lean provides a compiled language as well as a proof
assistant~\cite{moura2021lean4}. Direct evaluation of an exact
specification, however, can perform unnecessary work. Exact integers and
rationals let a reference operation compute an exact intermediate and
round once; a large exponent gap may produce a long integer whose low
bits cannot affect the answer, a cost also noted in
Flocq~\cite{boldo2011flocq}. FloatLib retains the exact definitions for
reasoning and proves that bounded alignment, truncated intermediates,
and specialized storage produce the same encoded results.
Small domains admit certified tables; larger formats use word, limb,
or exact backends according to their applicability and cost.
The refinement proofs must preserve carries and ties while discarding
the low-bit computations that do not affect the result.

\textbf{We make four contributions:}
\begin{itemize}
\item \textbf{Broad format coverage and arbitrary precision.}
FloatLib is, to our knowledge, the first Lean library to combine
IEEE binary and decimal, arbitrary-width posits, P3109, and custom formats
and rounding rules with interchangeable certified software backends.
Shared representation and quantization interfaces also cover fixed point,
logarithmic, codebook, and block-scaled representations.
Custom binary layouts reuse rounding,
error, and exactness theorems (\Cref{sec:models}); distinct contracts
cover posit rounding thresholds, exact quire accumulation within
capacity, and other numerical families (\Cref{sec:formats}).
\item \textbf{Fast arithmetic with interchangeable backends.}
We certify tables for small domains and develop word and limb kernels
whose shared rounding invariants justify discarding low bits.
Quotient checks separate division correctness from the algorithm that
constructs a candidate. A planner selects among certified implementations
using their costs and applicability conditions
(\Cref{sec:execution}).
\item \textbf{Fully verified software backends.}
Every backend admitted by the certified interface has a Lean proof that
it returns the encoded specification's result on every input, including
exceptional cases. Operation-specific theorems establish real rounding,
error bounds, and exactness under their stated assumptions
(\Cref{sec:models,sec:execution,sec:formats}).
\item \textbf{Comparisons of speed and numerical correctness.}
We benchmark standard and custom binary and posit formats against MPFR,
Berkeley SoftFloat, and Universal under matching configurations.
Further comparisons cover P3109 arithmetic with FLoPS and scalar conversions with
TensorLib. Conformance checks include more than 102 million TestFloat
evaluations of the binary model; our posit comparisons identify rounding
and exponent-packing bugs in the tested SoftPosit generic routines
(\Cref{sec:evaluation}).
\end{itemize}

\begin{table}[!tb]
\centering
\small
\setlength{\tabcolsep}{4pt}
\renewcommand{\arraystretch}{1.08}
\caption{Format scope, execution, and proof contracts of related systems.
Custom formats include both new layouts and new representations or
rounding rules; \Cref{sec:related} explains the distinctions.}
\label{tab:related-overview}
\begin{tabularx}{\linewidth}{@{}
  >{\raggedright\arraybackslash}p{.18\linewidth}
  >{\raggedright\arraybackslash}p{.27\linewidth}
  >{\raggedright\arraybackslash}p{.23\linewidth}
  >{\raggedright\arraybackslash}X@{}}
\toprule
\textbf{Work} & \textbf{Formats and extensibility} & \textbf{Execution} &
  \textbf{Connection to proofs} \\
\midrule
\rowcolor{black!6}
\multicolumn{4}{@{}l}{\strut\textbf{Arithmetic libraries and interfaces}} \\
Flocq~\cite{boldo2011flocq,flocq422} &
  Generic radix and exponent function; arbitrary-precision binary &
  Rounding and binary arithmetic &
  Generic theory and verified operators \\
FloatSpec~\cite{floatspec2026} &
  Generic formats; IEEE binary &
  Exact operators; noncomputable rounded IEEE models &
  Real-value and Hoare-style specifications; native-model bridges \\
FLoPS~\cite{chang2026flops,flops2026artifact} &
  Parameterized P3109; rounding policies &
  Executable P3109 arithmetic &
  Refinement and stochastic error bounds \\
Primitive Floats~\cite{bertholon2019primfloat,martindorel2023enabling} &
  Binary64; interval computation &
  Native floating-point primitives &
  Primitive axioms connected to Flocq \\
TensorLib~\cite{tensorlib2026} &
  Low-precision tensor dtypes &
  Specialized casts; native arithmetic &
  Tensor interfaces over native scalars \\
\rowcolor{theoremblue!8}
\textbf{FloatLib (ours)} &
  Arbitrary precision; binary, decimal, posits, P3109;
  custom formats and rounding &
  Certified table, word, and limb kernels; cost-based selection &
  Encoded refinement, numerical bridges, custom contracts \\
\midrule
\rowcolor{black!6}
\multicolumn{4}{@{}l}{\strut\textbf{Reasoning about expressions and programs}} \\
Gappa~\cite{daumas2010gappa} &
  Exact and rounded expressions &
  Interval and error analysis &
  Generated proof certificates \\
VCFloat2~\cite{appel2024vcfloat2} &
  User-defined formats and operations &
  Reflective roundoff analysis &
  Error bounds for Coq and VST proofs \\
SymFPU~\cite{brain2019symfpu} &
  IEEE/SMT floating point &
  Concrete or symbolic bit vectors &
  Floating-point encodings for SMT reasoning \\
\bottomrule
\end{tabularx}
\end{table}
\section{Encoded arithmetic and real rounding}\label{sec:models}

\Cref{fig:architecture} shows the four layers: exact numerical
representations, word and limb kernels, format models, and certified
execution. Let $W_F$ denote the encoded values of a format $F$, and
$S_F^\circ$ its reference operation for $\circ$. Implementation theorems
identify kernels with $S_F^\circ$; numerical theorems describe the
mathematical interpretation of the result.

\begin{figure}[tbp]
\centering
\definecolor{ovRuntime}{RGB}{86,180,233}
\definecolor{ovProof}{RGB}{230,159,0}
\begin{tikzpicture}[
  font=\small,
  cell/.style={draw=black!60, rounded corners=2pt, align=left, inner sep=2pt,
               text width=4.3cm,
               execute at begin node={\hyphenpenalty=10000\exhyphenpenalty=10000}},
  rt/.style={cell, fill=ovRuntime!18},
  pf/.style={cell, fill=ovProof!22},
  row/.style={inner sep=0pt},
  layer/.style={font=\small\bfseries, anchor=east},
  head/.style={font=\small\itshape, align=center, text width=4.3cm},
  cert/.style={-{Stealth[length=2.2mm]}, thick, draw=black!75},
  dep/.style={-{Stealth[length=2mm]}, draw=black!60},
  lab/.style={font=\scriptsize, inner sep=1pt},
]
\node[head] (hrt) at (0,0) {Definitions};
\node[head] (hpf) at (6.45,0) {Proofs and certificates};

\node[rt, below=0.08cm of hrt] (e-rt) {%
  \lean{ExecFloat F}; specifications;\\
  certified candidates;\\
  selection and planner};
\node[pf, anchor=north] (e-pf) at (hpf |- e-rt.north) {%
  \lean{add\_eq\_spec};\\
  \lean{run\_eq\_spec}\\
  for every certified candidate};
\node[row, fit=(e-rt) (e-pf)] (row-e) {};

\node[rt, anchor=north] (f-rt) at ([yshift=-0.08cm]hrt |- row-e.south) {%
  \lean{Model fmt}: stored bits;\\
  \lean{Model.Spec.add}; status flags};
\node[pf, anchor=north] (f-pf) at (hpf |- f-rt.north) {%
  \lean{toReal\_\allowbreak add\_\allowbreak eq\_\allowbreak roundAt};\\
  flag, half-ulp, Sterbenz,\\
  and interval theorems on $\R$};
\node[row, fit=(f-rt) (f-pf)] (row-f) {};

\node[rt, anchor=north] (k-rt) at ([yshift=-0.08cm]hrt |- row-f.south) {%
  \lean{UInt64}/\lean{UInt128} arithmetic;\\
  certified division;\\
  restoring square root};
\node[pf, anchor=north] (k-pf) at (hpf |- k-rt.north) {%
  \lean{@[csimp]} equations;\\
  \lean{toNat} lemmas (\lean{mul64\_toNat})};
\node[row, fit=(k-rt) (k-pf)] (row-k) {};

\node[rt, anchor=north] (n-rt) at ([yshift=-0.08cm]hrt |- row-k.south) {%
  \lean{Dyadic}, \lean{SignedRat};\\
  rounding and quantization};
\node[pf, anchor=north] (n-pf) at (hpf |- n-rt.north) {%
  representation laws;\\
  exact arithmetic\\
  (\lean{Dyadic.add\_toRat})};

\node[layer, left=0.2cm of e-rt] (l-e) {Execution};
\node[layer, left=0.2cm of f-rt] (l-f) {Formats};
\node[layer, left=0.2cm of k-rt] (l-k) {Kernels};
\node[layer, left=0.2cm of n-rt] (l-n) {Numerics};

\draw[cert] (e-pf.west |- e-rt.east) -- node[lab, above] {certificate} node[lab, below] {\lean{run = spec}} (e-rt.east);
\draw[cert] (f-pf.west |- f-rt.east) -- node[lab, above] {\lean{Model.add =}} node[lab, below] {\lean{Model.Spec.add}} (f-rt.east);
\draw[cert] (k-pf.west |- k-rt.east) -- node[lab, above] {\lean{@[csimp]}} (k-rt.east);

\draw[dep] ([xshift=-0.3cm]l-n.west |- n-rt.south) -- node[lab, midway, rotate=90, anchor=south] {used by} ([xshift=-0.3cm]l-n.west |- f-rt.south);

\begin{scope}[on background layer]
  \node[draw=black!40, dashed, rounded corners=3pt, fit=(e-rt) (e-pf) (f-rt) (f-pf), inner sep=4pt] (floats) {};
\end{scope}
\node[rotate=90, anchor=north, font=\footnotesize\itshape, inner sep=1pt] at (floats.east) {Floats};
\end{tikzpicture}
\caption{FloatLib's four functional layers. Definitions (blue) appear
beside their proofs and certificates (yellow), including proof fields
erased at compilation. The upward arrow summarizes dependencies from
Numerics and Kernels into Floats; Formats and Execution also import one
another. Horizontal arrows show equations used to certify execution or
replace a definition during compilation.}
\label{fig:architecture}
\end{figure}

\Needspace{16\baselineskip}
\subsection{Representations and rounding contracts}\label{sec:representations}

A format's storage and meaning have separate interfaces in
\lean{FloatLib.Numerics} (namespace and setup declarations omitted):
\par\noindent\begin{minipage}{\linewidth}
\begin{lstlisting}
class EncodedFormat (F : Type u) where
  Code : Type v
  Scalar : Type w

class FormatSemantics (F : Type u) [EncodedFormat F] where
  denote : EncodedFormat.Code (F := F) →
    NumericalValue (EncodedFormat.Scalar (F := F))
\end{lstlisting}
\end{minipage}\par
\lean{NumericalValue} distinguishes finite scalars, infinities, and
exceptional values. Neither interface assumes a radix or bit layout.
A quantization contract relates an exact input to an output in a given
context; the context can carry a rounding mode, a shared scale, or random
bits. A new family supplies its denotation, operations, and proofs of
these contracts. Existing families already supply them for layouts
satisfying their parameter conditions.

A nonuniform four-entry codebook illustrates how a new representation
reuses the library's nearest-entry quantizer:
\par\noindent\begin{minipage}{\linewidth}
\begin{lstlisting}
import FloatLib
open FloatLib.Floats

def palette : Formats.Codebook 2 ℚ where
  denote w := .finite (match w.toNat with
    | 0 => -1 | 1 => 0 | 2 => 1 / 3 | _ => 2)
abbrev C := ExecFloat.Codebook palette
example : palette.nearestCode (1 / 2) = some 2 := by decide +kernel
\end{lstlisting}
\end{minipage}\par
The result is code~2, denoting $1/3$.
The generic \lean{nearestCode\_spec} theorem proves that the selected
finite entry minimizes distance to the input; ties select the lower
code. Users can replace that rule and prove a different quantization
contract over the same table. The carrier \lean{C} retains the table in
its type, so equal-width codebooks cannot be silently interchanged.

\subsection{Binary representations}

A binary descriptor records an exponent-field width $e_w$, a fraction-field
width $f_w$, a bias $b$, and an exceptional-encoding policy. Its stored
values are bit vectors of width $1+e_w+f_w$. The descriptor's proof fields
require $e_w\geq2$, $f_w>0$, and a bias that can encode the finite normal
number one. The same descriptor covers standard and custom layouts.
Binary32 instantiates $(e_w,f_w)=(8,23)$; the
custom 24-bit layout $(8,15)$ has precision~16 and uses the same
definitions and theorems.

Decoding produces a finite dyadic, an infinity, or a NaN. A finite dyadic
has the form $(-1)^s m2^q$, with natural significand $m$, integer exponent
$q$, and a Boolean sign $s$. Retaining the sign at $m=0$ distinguishes the
two zeros. The reference operations handle exceptional cases according
to the descriptor, compute finite intermediates exactly, and round and
pack the result. For fused multiply-add, the intermediate is the exact
dyadic $xy+z$, with one final rounding. Equality in $W_F$ preserves the
specification's choice of zero sign and exceptional encoding.
Status-returning operations additionally expose
five flags: invalid, division by zero, overflow, underflow, and inexact.
These are data returned with the result, not an implicit state of the
host processor. The binary status layer detects tininess after
rounding to the destination precision with an unbounded exponent;
underflow also requires inexactness.

The custom layout illustrates why the rounding point matters under
nearest-even rounding.
For $a=1+2^{-9}$, $x=1-2^{-9}$, and $b=-1$, all three operands are exact
and fused multiply-add returns $-2^{-18}$. Separate multiplication
rounds $1-2^{-18}$ to $1$, since the spacing immediately below one
is $2^{-16}$; adding $b$ then returns zero.
Both computations represent the real expression $ax+b$, but they
implement different operations.

The real interpretation, written $\operatorname{val}_F$, forgets the
sign of zero. Its Lean definition is total and assigns zero to nonfinite
values. Numerical theorems using this interpretation therefore state
finiteness explicitly. For directed
bounds that may overflow, the extended reals provide a more suitable
codomain.

\subsection{Generic rounding theory}

The mathematical theory follows Flocq's radix-and-exponent
formulation~\cite{boldo2011flocq,boldo2017flocq}. A radix $\beta\geq2$
and an exponent function $\varphi:\Z\to\Z$ define a representable grid.
For $x\neq0$, choose $e$ so that $\beta^{e-1}\leq |x|<\beta^e$
and set $c(x)=\varphi(e)$.
The value $x$ is representable when $x\beta^{-c(x)}$ is an integer.
Fixed point takes $\varphi(e)=e_{\min}$; an unbounded precision-$P$
format takes $\varphi(e)=e-P$; gradual underflow takes
\begin{equation}\label{eq:exponent-grid}
  \varphi(e)=\max(e-P,e_{\min}).
\end{equation}
Validity conditions on $\varphi$ make the local grids compatible across
magnitude intervals, with additional monotonicity hypotheses where
needed. Rounding reduces to an integer decision on the scaled significand:
\begin{equation}\label{eq:real-rounding}
  R(x)=\operatorname{rnd}\!\left(x\beta^{-c(x)}\right)\beta^{c(x)}.
\end{equation}
Nearest-even uses nearest-integer rounding with an even integer chosen
at a tie. Directed rounding replaces that integer operation by floor,
ceiling, or truncation. The generic theory proves representability, order
properties, and exactness on representable inputs. Its real-valued
statements use mathlib~\cite{mathlib2020}, so the rounding results can be
combined with its analysis library.

For a binary descriptor, set $P=f_w+1$, $e_{\min}=1-b-f_w$, and
$\beta=2$ in \eqref{eq:exponent-grid}. Write $R_F$ for nearest-even rounding
on this grid, with $R_F(0)=0$, and set
$\ulp_F(x)=2^{c(x)}$ for $x\neq0$ and
$\ulp_F(0)=2^{e_{\min}}$. The grid models gradual underflow but has no
upper exponent bound. It is therefore meaningful to round every real
number even when a corresponding finite word cannot be stored.

\begin{theorem}[Rounding error]\label{thm:rounding-error}
For every real $x$, $|R_F(x)-x|\leq \tfrac12\ulp_F(x)$.
If $x\neq0$ and $|x|\geq2^{1-b}$, then
$|R_F(x)-x|/|x|\leq2^{-P}$.
\end{theorem}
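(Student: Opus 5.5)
The plan is to reduce the absolute bound to integer rounding via \eqref{eq:real-rounding}, and then to turn it into the relative bound using the binade of $x$ and the grid \eqref{eq:exponent-grid}.

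\textbf{Absolute bound.} For $x=0$ the claim holds because $R_F(0)=0$. For $x\neq0$, set $c=c(x)$ and $y=x2^{-c}$. Nearest-even integer rounding satisfies $|\operatorname{rnd}(y)-y|\leq\tfrac12$ for every real $y$, whichever integer the tie rule chooses. By \eqref{eq:real-rounding} we have $R_F(x)-x=(\operatorname{rnd}(y)-y)2^{c}$, so
\[
|R_F(x)-x|\leq\tfrac12\,2^{c}=\tfrac12\ulp_F(x).
\]
Everything here is algebra plus one integer lemma. The only care needed is to match the Lean definition of $R_F$ with \eqref{eq:real-rounding}, which holds by unfolding.

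\textbf{Relative bound.} Let $x\neq0$ and choose $e$ with $2^{e-1}\leq|x|<2^e$. From $|x|\geq 2^{1-b}$ and $|x|<2^e$ we get $e>1-b$, hence $e-P\geq 1-b-P=1-b-f_w-1=e_{\min}-1$.

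This is the delicate step, since $\varphi(e)=\max(e-P,e_{\min})$ may still equal $e_{\min}$ when $e-P=e_{\min}-1$. In that boundary case $e=e_{\min}+P-1=1-b$, and $2^{e-1}\leq|x|$ gives $|x|\geq2^{-b}$. Our hypothesis, however, gives $|x|\geq2^{1-b}=2^{e}$, which contradicts $|x|<2^e$. So that case cannot occur, and $e\geq 2-b$, i.e.\ $e-P\geq e_{\min}$. Therefore $c(x)=e-P$ and $\ulp_F(x)=2^{e-P}$. It follows that
\[
\frac{|R_F(x)-x|}{|x|}\leq\frac{\tfrac12\,2^{e-P}}{2^{e-1}}=2^{-P}.
\]

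\textbf{Main obstacle.} The step I expect to be hardest is formally establishing the binade exponent $e$ (the integer with $2^{e-1}\leq|x|<2^e$, via $\lfloor\log_2|x|\rfloor+1$ in mathlib) and checking the off-by-one between $e_{\min}$ and the threshold $2^{1-b}$, which is the smallest normal magnitude. I would first prove a lemma that $|x|\geq 2^{e_{\min}+P-1}$ implies $c(x)=e-P$, and then substitute $e_{\min}+P-1=1-b$.
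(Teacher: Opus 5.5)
Your proof is correct and follows the paper's argument: the integer rounding error of at most $1/2$, scaled by the grid unit $2^{c(x)}$, gives the absolute bound, and $c(x)=e-P$ in the normal range together with $|x|\geq2^{e-1}$ gives the relative bound. The boundary-case detour is unnecessary, since your own observation $e>1-b$ already gives $e\geq2-b$ and hence $e-P\geq e_{\min}$ directly.
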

\begin{proof}
Nearest-integer rounding changes the scaled significand by at most
$1/2$; multiplying by the positive grid unit gives the absolute bound.
In the normal range, $c(x)=e-P$ and $|x|\geq2^{e-1}$ give the relative
bound. Below it, constant grid spacing preserves the absolute bound
but not a uniform relative bound.
\end{proof}

\subsection{Real interpretation of encoded operations}

The binary bridge applies to descriptors with IEEE exceptional encodings
and the conventional bias $2^{e_w-1}-1$. This is the library predicate
\lean{isIEEE}, which admits user-defined widths with these conventions
without requiring a standardized layout. The bridge theorem identifies
packing an exact dyadic with rounding its real value:
\begin{equation}\label{eq:dyadic-bridge}
  \operatorname{val}_F(\operatorname{roundDyadic}_F(d))
      =R_F(d_{\R}),
\end{equation}
provided the packed word is finite. The finiteness condition is essential:
the grid rounder has no overflow, whereas an encoded operation may return
infinity. The proof compares the integer quotient and remainder used by
the two rounders: normal packing, subnormal packing, and a carry into the
next exponent preserve the same nearest-even decision. A range argument
rules out exceptional packing under the finiteness premise.

\begin{theorem}[Finite arithmetic bridge]\label{thm:arithmetic-bridge}
Let $F$ satisfy the preceding IEEE conventions. If $x,y$ and their
reference sum are finite, then
\[
  \operatorname{val}_F(S_F^+(x,y))
    =R_F(\operatorname{val}_F(x)+\operatorname{val}_F(y)).
\]
The analogous equations hold for subtraction and multiplication when
their operands and respective results are finite.
For fused multiply-add with finite operands and result,
\[
  \operatorname{val}_F(S_F^{\mathrm{fma}}(x,y,z))
    =R_F(\operatorname{val}_F(x)\operatorname{val}_F(y)
          +\operatorname{val}_F(z)).
\]
\end{theorem}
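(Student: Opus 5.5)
The plan is to reduce each operation to the dyadic bridge \eqref{eq:dyadic-bridge}. By construction, the reference operations $S_F^\circ$ handle exceptional operands by case analysis, form the exact dyadic intermediate, and then pack it with $\operatorname{roundDyadic}_F$. So I would first unfold $S_F^+(x,y)$ on finite operands. Since $x$ and $y$ decode to finite dyadics $d_x,d_y$, no NaN or infinity branch fires, and the specification reduces definitionally (up to one rewriting lemma) to $\operatorname{roundDyadic}_F(d_x+d_y)$. The hypothesis that the reference sum is finite is exactly the finiteness premise of \eqref{eq:dyadic-bridge}. That bridge then gives $\operatorname{val}_F(S_F^+(x,y))=R_F((d_x+d_y)_{\R})$.

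The second step is to convert the exact dyadic value back to reals. Here I would use the exact arithmetic lemma of the Numerics layer (\lean{Dyadic.add\_toRat}, cast to $\R$), which gives $(d_x+d_y)_{\R}=(d_x)_{\R}+(d_y)_{\R}$. I also need that $\operatorname{val}_F(x)=(d_x)_{\R}$ for finite $x$, which holds by the definition of $\operatorname{val}_F$ on the finite branch. The multiplication and fused multiply-add cases follow the same pattern, using the analogous \lean{mul\_toRat} lemma and the fact, stated in the text, that the fma intermediate is the exact dyadic $xy+z$, rounded once.

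Subtraction needs a small additional argument. I would either show that the reference subtraction is addition with the sign of $y$ flipped, or unfold it directly to $\operatorname{roundDyadic}_F(d_x-d_y)$. In either case, negating a finite dyadic negates its real value, even for a signed zero, because $\operatorname{val}_F$ forgets the zero sign.

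I expect the main obstacle to be the zero cases rather than the rounding. When an exact intermediate is zero, for example $x+(-x)$ or a product involving $\pm0$, the specification may choose the sign of zero by a separate rule and bypass $\operatorname{roundDyadic}_F$. The fma path can also short-circuit when the product is an exact zero. For these branches I would handle the cases directly instead of routing them through \eqref{eq:dyadic-bridge}. Both sides have real value $0$, since $R_F(0)=0$ and $\operatorname{val}_F(\pm0)=0$, so these branches close by computation once the case split on the zero tests matches the specification's own structure.
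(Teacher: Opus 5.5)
Your proposal follows the paper's proof. On finite operands the reference operation packs its exact dyadic intermediate, the finite-result hypothesis supplies the premise of \eqref{eq:dyadic-bridge}, and the intermediate's real value is the corresponding real sum, difference, product, or product-and-sum.

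One minor slip is in your zero-case aside. Suppose an FMA specification did short-circuit on a zero product with $z\neq0$. Then both sides would equal $\operatorname{val}_F(z)$, not $0$, so you would need that $R_F$ fixes the value of every finite word, not just $R_F(0)=0$. As the paper describes it, though, the specification rounds the exact $xy+z$, so only an exactly zero result needs separate sign handling.
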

\begin{proof}
The reference operation first computes its finite dyadic intermediate
exactly. Its real interpretation is the corresponding real sum,
difference, product, or product-and-sum. Applying
\eqref{eq:dyadic-bridge} supplies the single final rounding.
\end{proof}

Combining \Cref{thm:rounding-error,thm:arithmetic-bridge} bounds the error
by half an ulp of the exact result. In \Cref{sec:affine-proof}, a Lean example
applies these results to the custom affine operation without unfolding
packing or the selected backend.

Division has the same nearest-even connection when the divisor is
nonzero and the operands and result are finite. Square root requires a
finite operand that is nonnegative or a zero of either sign; its proof
also establishes result finiteness. Directed addition, subtraction,
multiplication, division, and square root instead provide outward
enclosures in the extended reals under their respective domain
conditions. Infinities can then be valid endpoints after overflow.
These statements concern the named arithmetic operations; elementary
function contracts are discussed in \Cref{sec:elementary}.

\subsection{Exact subtraction with gradual underflow}

Sterbenz's lemma states that subtracting nearby floating-point values is exact
\cite{sterbenz1974,muller2018handbook}.

\begin{theorem}[Encoded Sterbenz lemma]\label{thm:sterbenz}
Let $x,y$ be finite words in an IEEE-style descriptor, with positive
values $u=\operatorname{val}_F(x)$ and $v=\operatorname{val}_F(y)$.
If $u\leq2v$ and $v\leq2u$, then
$\operatorname{val}_F(S_F^-(x,y))=u-v$.
No separate finiteness assumption on the result is needed.
\end{theorem}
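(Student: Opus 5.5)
The plan is to show that $u-v$ already lies on the grid of $R_F$ and is small enough not to overflow. Then $R_F(u-v)=u-v$, and the remaining work is to obtain finiteness of the encoded result so that the bridge applies.

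\textbf{Grid membership.} Without loss of generality I treat $v\le u$; the other case follows by antisymmetry, using $S_F^-(x,y)$ versus $-(u'-v')$, or by running the same argument with the roles swapped. So $v\le u\le 2v$. Write $u=m_u2^{c_u}$ and $v=m_v2^{c_v}$ with integer significands and $c_u=c(u)$, $c_v=c(v)$ given by \eqref{eq:exponent-grid}. Since $\varphi(e)=\max(e-P,e_{\min})$ is monotone and $v\le u$, we have $c_v\le c_u$. Hence $u-v=(m_u2^{c_u-c_v}-m_v)2^{c_v}$ is an integer multiple of $2^{c_v}$.

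Because $u\le 2v$, we get $0\le u-v\le v$. By the validity and monotonicity of $\varphi$, the exponent $c(u-v)$ is at most $c_v$; it equals $e_{\min}$ in the subnormal regime, which is exactly where gradual underflow matters. So $(u-v)2^{-c(u-v)}$ is an integer, and $u-v$ is representable. The case $u=v$ gives zero, which is trivially representable. By exactness of the generic rounding on representable inputs, $R_F(u-v)=u-v$.

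\textbf{Finiteness.} This is the step I expect to be the main obstacle, because \Cref{thm:arithmetic-bridge} assumes the reference difference is finite and the statement forbids assuming it. I would work at the dyadic level. The reference subtraction computes the exact dyadic $d=x-y$ and applies $\operatorname{roundDyadic}_F$. I need a lemma saying that if $d_{\R}$ is representable and $|d_{\R}|\le$ the largest finite value, then packing is finite.

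Here $|u-v|\le\max(u,v)$, and $\max(u,v)$ is the value of a finite word, so it is at most the largest finite magnitude. Packing a representable dyadic of that magnitude is finite. I would prove this by the quotient/remainder analysis used for \eqref{eq:dyadic-bridge}: the remainder is zero, so no rounding carry occurs, and the exponent field stays within range. I would also check in the reference definition that finite positive operands never reach an exceptional branch such as $\infty-\infty$.

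\textbf{Conclusion.} With finiteness in hand, \Cref{thm:arithmetic-bridge} for subtraction gives $\operatorname{val}_F(S_F^-(x,y))=R_F(u-v)=u-v$. The zero case $u=v$ only needs the observation that $\operatorname{val}_F$ forgets the sign of zero.
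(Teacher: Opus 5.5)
Your proof is correct, but the grid-membership step takes a different route from the paper. The finiteness and concluding steps are the same as the paper's. The bound $|u-v|\le\max(u,v)$ limits the exact difference by the value of a finite input, so packing cannot overflow, and the subtraction case of \Cref{thm:arithmetic-bridge} then applies with this derived finiteness.

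For grid membership, you give the classical direct Sterbenz argument for a monotone exponent function. With $m=\min(u,v)$, both operands are integer multiples of $2^{c(m)}$. Since $|u-v|\le m$, monotonicity of $\varphi$ gives $c(u-v)\le c(m)$, so $u-v$ lies on the gradual-underflow grid in one uniform step.

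The paper instead splits on magnitude. When $|u-v|\le 2^{e_{\min}+P-1}$, it uses only that both operands are multiples of $2^{e_{\min}}$. For larger differences, it embeds the operands in the unbounded precision-$P$ format and invokes the Sterbenz theorem already available there. The magnitude bound then transfers representability back to the destination grid, since the two exponent functions agree above that threshold.

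Your route is self-contained and works for any valid monotone $\varphi$. Its dependence on gradual underflow appears as monotonicity, which the flush-to-zero exponent function lacks. The paper's route reuses an existing unbounded-precision theorem and handles the subnormal-range case by a separate lattice argument.

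Your WLOG step is phrased a little loosely, but it is harmless because the grid is symmetric under negation. Working with $\min(u,v)$ avoids it altogether. Your zero-remainder, no-carry analysis for finiteness is also more detailed than necessary: once $R_F(u-v)=u-v$ and $|u-v|$ is bounded by a finite word's value, the overflow branch of packing cannot be taken.
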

\begin{proof}
Both operands, and hence their difference, are integer multiples of
$2^{e_{\min}}$. If $|u-v|\leq2^{e_{\min}+P-1}$, this lattice property
places the difference in the gradual-underflow grid. For larger
differences, embed the operands in the unbounded precision-$P$ format,
apply its Sterbenz theorem, and use the magnitude bound to return to
the destination grid. Rounding therefore leaves the difference
unchanged. Since $|u-v|\leq\max(u,v)$ also bounds it by a finite input,
subtraction cannot overflow; the arithmetic bridge applies with this
derived finiteness proof.
\end{proof}

Cancellation can produce a subnormal even from normal operands, so
replacing gradual underflow by flush-to-zero would invalidate the
conclusion. The theorem concerns the real value of the difference;
the encoded operation also determines the sign of an exact zero.
\section{Efficient execution by refinement}\label{sec:execution}

Backend refinement proves that a cheaper implementation returns
the same encoded result. This equality lets the planner choose a kernel
while preserving the numerical theorems used to reason about the program.

\subsection{Execution certificates and backend selection}

The common interface, \path{ExecFloat.Backend.Certified}, pairs an
implementation with a proof of equality to its specification
(comments and setup declarations omitted):
\par\noindent\begin{minipage}{\linewidth}
\begin{lstlisting}
structure Certified {α : Type u} (spec : α) where
  estimate : Candidate
  run : α
  run_eq_spec : run = spec
\end{lstlisting}
\end{minipage}\par
The type $\alpha$ describes the whole operation, whether binary
arithmetic, a unary function, FMA, or an accumulator update.
The function equality \lean{run\_eq\_spec} covers every operand,
including exceptional values; \lean{Candidate} contains the cost
estimates used for selection.

A codec connects a family's chosen carrier, such as a machine word, to
its encoded model and proves the round-trip laws. Lifting a model
operation through this codec gives the specification on the carrier;
backend certificates prove equality with that lifted operation.

The planner folds over certified alternatives to a mandatory certified
baseline, comparing estimated warm and cold costs under a user-selected
policy and applying memory ceilings to replacements. Its result still
has type \lean{Certified spec}: the selection theorem is
\lean{run\_eq\_spec}, regardless of the cost model's accuracy.
The baseline remains available even if it exceeds the ceilings, so these
are selection preferences rather than proved resource bounds.

Kernel eligibility depends on descriptor fields and the operation. For
example, a binary two-word kernel requires the IEEE-style conventions,
more than 64 stored fraction bits, and at most 128 encoded bits. These
conditions permit a proof over all layouts that fit the representation.
An eligible kernel can still decline an operand outside its specialized
case; its complete dispatcher then calls the reference implementation.
The equality theorem covers both branches. Direct entry points and
inlining let closed configurations avoid constructing a planning record
at each call, while the same certificate identifies their meaning.

\subsection{Tables, words, and limbs}

A binary operation on an eight-bit domain has only $2^{16}$ input pairs:
a complete table replaces decoding, arithmetic, and rounding by indexing,
with each entry proved equal to the reference result. For larger layouts,
word kernels keep significands and products in bounded unsigned words
with explicit carries and borrows. At wider precisions, limb arrays
avoid converting every intermediate to the reference representation.

Aligning $a2^d$ and $b$ exactly can construct an
integer whose length grows with the exponent separation $d$.
If cancellation cannot expose the omitted bits, a short retained prefix
and one sticky bit suffice for the final rounding. A cancellation guard
justifies computing with this summary. FMA reuses the argument after
forming an exact product, preserving its single rounding.

Division instead separates proposal from acceptance: a bounded
multiplication and remainder comparison check a proposed quotient, and
a proved divider handles rejection (\Cref{fig:division}).
We measure the complete operations on a common scalar workload
(\Cref{sec:evaluation});
the timings do not isolate the contribution of each optimization.

\subsection{Guard and sticky invariants}
\label{sec:math-sticky}

A kernel often needs only a short prefix of an exact intermediate.
The problem is to summarize the discarded suffix without changing a
later rounding decision. For a natural number $x$ and a shift $j$, define
\[
 J_j(x)=
 \begin{cases}
   \lfloor x/2^j\rfloor,&x\bmod2^j=0,\\
   \lfloor x/2^j\rfloor\mathbin{\mathrm{OR}}1,&x\bmod2^j\neq0.
 \end{cases}
\]
This is \emph{shift right with jamming}: a nonzero discarded suffix sets
the low retained bit. Each word or limb implementation is proved equal
to this representation-independent definition.

To see what must be preserved, write, for $s>0$,
\[
 x=q2^s+g2^{s-1}+r,
 \qquad g\in\{0,1\},\quad 0\leq r<2^{s-1}.
\]
Rounding $x/2^s$ to nearest-even gives
\begin{equation}\label{eq:guard-sticky}
 E_s(x)=q+\mathbf{1}\bigl[g=1\ \land\ (r\neq0\ \lor\ q\text{ odd})\bigr].
\end{equation}
The retained quotient, guard bit $g$, and nonzeroness of the remaining
suffix determine the answer. \Cref{fig:sticky}(a) shows which information
survives the shorter representation.

\begin{theorem}[Rounding after jamming]
\label{thm:sticky}
For $x,j,s\in\N$ with $j+2\leq s$,
\mbox{$E_{s-j}(J_j(x))=E_s(x)$}.
\end{theorem}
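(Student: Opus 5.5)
We prove the statement by decomposing $x$ at the two scales involved and checking that the three quantities in \eqref{eq:guard-sticky} (quotient, guard bit, sticky predicate) coincide on both sides. Set $t=s-j\geq2$. Write $x=y2^j+z$ with $y=\lfloor x/2^j\rfloor$ and $0\leq z<2^j$. Then $J_j(x)=y$ when $z=0$, and $J_j(x)=y\mathbin{\mathrm{OR}}1$ when $z\neq0$.

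First decompose $y$ at scale $t$ as $y=q2^t+g2^{t-1}+r'$ with $0\leq r'<2^{t-1}$. Since $2^s=2^t2^j$, we get
\[
x=q2^s+g2^{s-1}+(r'2^j+z),
\]
and $0\leq r'2^j+z<2^{t-1}2^j=2^{s-1}$. By uniqueness of the decomposition, $E_s(x)$ is governed by $q$, $g$, and $r=r'2^j+z$. In particular, $r\neq0$ holds iff $r'\neq0$ or $z\neq0$.

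Next decompose $J_j(x)$ at scale $t$. If $z=0$, then $J_j(x)=y$ and the decomposition above is already the right one. If $z\neq0$, then $J_j(x)=y\mathbin{\mathrm{OR}}1$, which equals either $y$ (when $y$ is odd) or $y+1$ (when $y$ is even). Because $t\geq2$, bit~$0$ lies strictly below the guard position $t-1$. So setting bit $0$ changes only the low remainder, and it becomes $r'\mathbin{\mathrm{OR}}1$, still $<2^{t-1}$. This leaves $q$ and $g$ untouched. The new remainder $r'\mathbin{\mathrm{OR}}1$ is nonzero, matching $r\neq0$ on the other side. When $z=0$, both remainders are nonzero iff $r'\neq0$.

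Finally, substitute into \eqref{eq:guard-sticky}. Both $E_t(J_j(x))$ and $E_s(x)$ equal $q+\mathbf{1}[g=1\land(\text{rem}\neq0\lor q\text{ odd})]$ with identical $q$, $g$, and nonzeroness of the remainder, so they are equal.

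The main obstacle is the arithmetic fact that $\mathbin{\mathrm{OR}}1$ commutes with the decomposition at scale $t$: that $(y\mathbin{\mathrm{OR}}1)\bmod 2^{t-1}=(y\bmod 2^{t-1})\mathbin{\mathrm{OR}}1$, with the higher bits unchanged. This is exactly where $t\geq2$, i.e.\ $j+2\leq s$, is needed. With $t=1$ the OR would overwrite the guard bit itself. We would handle this by the parity case split above, writing $y\mathbin{\mathrm{OR}}1=y+(1-y\bmod2)$, and using that adding $1$ to an even $y$ cannot carry past bit~$0$.
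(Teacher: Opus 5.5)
Your proof is correct and follows essentially the paper's argument. The paper (details in the appendix) also writes $x=2^j(A2^k+B)+v$, notes that for $k>0$ setting bit zero leaves the quotient $A$ unchanged and the suffix below $2^k$, and then applies this at the quotient and guard positions before substituting into \eqref{eq:guard-sticky}; the only difference is that it states the quotient, bit, and suffix-nonzeroness identities for general $k>0$, whereas you decompose directly at $t=s-j$.
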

\begin{proof}
For $k>0$, setting the low bit cannot affect the quotient above position
$k$ or the bit at position $k$. The shared integer lemmas establish
\[
 \begin{aligned}
 \left\lfloor J_j(x)/2^k\right\rfloor
     &=\left\lfloor x/2^{j+k}\right\rfloor,\\
 \operatorname{bit}(J_j(x),k)&=\operatorname{bit}(x,j+k),\\
 J_j(x)\bmod2^k\neq0
   &\quad\Longleftrightarrow\quad x\bmod2^{j+k}\neq0.
 \end{aligned}
\]
The last line is the suffix invariant. If the initial shift was exact,
it follows from divisibility.
Otherwise the jammed word is odd and the original discarded suffix is
nonzero. Apply these identities at the retained quotient and guard
positions, and substitute into \eqref{eq:guard-sticky}. The inequality
$j+2\leq s$ places the jam strictly below the guard.
\end{proof}

The separation condition cannot be weakened in general. For $x=5$, $j=1$, and $s=2$,
jamming gives $J_1(5)=3$. Rounding $3/2$ to even gives $2$, whereas
rounding $5/4$ gives $1$. Here the jam has replaced the guard bit.

The wide-limb subtraction kernel gives a concrete use of the theorem.
It stores 32-bit limbs and admits IEEE-style descriptors wider than
128 bits with exponent fields of at most 32 bits. Let $p$ be the stored
fraction width, $a,b\geq2^p$ the magnitudes to be subtracted, and
$d=s_a-s_b\geq0$ their scale difference. The exact aligned integer is
$x=a2^d-b$. Writing $\ell_a=\lfloor\log_2a\rfloor$ and
$\ell_b=\lfloor\log_2b\rfloor$, the compressed branch checks
$d\geq3$ and $\ell_b+2\leq\ell_a+d$, then sets $j=d-3$.
This guard bounds cancellation: $b<2^{\ell_a+d-1}$ while
$a2^d\geq2^{\ell_a+d}$. Consequently $x\geq2^{p+j+2}$.
For the final rounding shift $s=\lfloor\log_2x\rfloor-p$, this supplies
exactly $j+2\leq s$.
The limb calculation must also construct $J_j(x)$ correctly. Decompose
$b=Q2^j+r$, with $0\leq r<2^j$. When $r\neq0$,
\[
 x=(8a-Q-1)2^j+(2^j-r),
 \qquad J_j(x)=(8a-Q-1)\mathbin{\mathrm{OR}}1.
\]
The final one in the subtraction is the borrow shown in
\Cref{fig:sticky}(b). When $r=0$, the result is simply $8a-Q$.
The guard also proves that the limb subtraction cannot underflow.
The representation lemmas identify these limb operations with $J_j(x)$;
\Cref{thm:sticky} then identifies their rounding with rounding the exact $x$.

\begin{figure}[!htb]
\centering
\begin{tikzpicture}[
  font=\small,
  edge/.style={draw=black!65,semithick},
  arr/.style={-{Stealth[length=1.8mm]},semithick},
  title/.style={font=\small\bfseries,anchor=west},
  lab/.style={font=\small,fill=white,inner sep=1.5pt}
]
\node[title] at (0,1.6) {(a) A sticky bit preserves the rounding decision};
\node at (3.0,1.05) {quotient};
\node at (4.65,1.05) {guard};
\node at (8.25,1.05) {lower suffix $r$};
\draw[edge] (5.15,0.85) -- (5.15,0.75) -- (11.35,0.75) -- (11.35,0.85);

\node[anchor=east] at (1.55,0.325) {$x$};
\draw[edge,fill=blue!6] (1.8,0) rectangle (4.2,0.65);
\draw[edge,fill=black!7] (4.2,0) rectangle (5.1,0.65);
\draw[edge] (5.1,0) rectangle (7.85,0.65);
\draw[edge] (7.85,0) rectangle (8.75,0.65);
\draw[edge,fill=orange!12] (8.75,0) rectangle (11.35,0.65);
\node at (3.0,0.325) {$q$};
\node at (4.65,0.325) {$g$};
\node at (6.475,0.325) {$\cdots$};
\node at (8.3,0.325) {$\eta$};
\node at (10.05,0.325) {$v=x\bmod2^j$};

\node[anchor=east] at (1.55,-0.875) {$J_j(x)$};
\draw[edge,fill=blue!6] (1.8,-1.2) rectangle (4.2,-0.55);
\draw[edge,fill=black!7] (4.2,-1.2) rectangle (5.1,-0.55);
\draw[edge] (5.1,-1.2) rectangle (7.85,-0.55);
\draw[edge,fill=orange!20] (7.85,-1.2) rectangle (8.75,-0.55);
\node at (3.0,-0.875) {$q$};
\node at (4.65,-0.875) {$g$};
\node at (6.475,-0.875) {$\cdots$};
\node at (8.3,-0.875) {$\eta'$};
\draw[arr] (3.0,-0.04) -- (3.0,-0.51);
\draw[arr] (4.65,-0.04) -- (4.65,-0.51);
\draw[arr] (6.475,-0.04) -- (6.475,-0.51);
\draw[arr,orange!75!black] (10.05,-0.04) |- (8.8,-0.875);
\node[lab,anchor=west] at (10.18,-0.45) {remove $j$ bits};
\node[anchor=west] at (8.0,-1.57)
  {$\eta'=\eta\mathbin{\mathrm{OR}}\mathbf{1}[v\neq0]$};
\node[anchor=west] at (1.8,-1.57)
  {$j+2\leq s$: the jam stays below $g$.};

\node[title] at (0,-2.35) {(b) Account for the subtraction borrow before jamming};
\node at (3.0,-2.87) {prefix $\times\,2^j$};
\node at (5.8,-2.87) {tail};
\node[anchor=east] at (1.35,-3.425) {$x$};
\draw[edge,fill=blue!6] (1.6,-3.75) rectangle (4.4,-3.1);
\draw[edge,fill=orange!12] (4.4,-3.75) rectangle (7.2,-3.1);
\node at (3.0,-3.425) {$8a-Q$};
\node at (5.8,-3.425) {$-r$};
\draw[arr] (3.0,-3.8) -- node[lab,left,xshift=-3pt] {borrow $1$} (3.0,-4.4);
\draw[arr] (5.8,-3.8) -- node[lab,right,xshift=3pt] {add $2^j$} (5.8,-4.4);
\node[anchor=east] at (1.35,-4.775) {$x$};
\draw[edge,fill=blue!6] (1.6,-5.1) rectangle (4.4,-4.45);
\draw[edge,fill=orange!12] (4.4,-5.1) rectangle (7.2,-4.45);
\node at (3.0,-4.775) {$8a-Q-1$};
\node at (5.8,-4.775) {$2^j-r>0$};
\node[edge,fill=orange!8,minimum height=0.65cm,inner sep=4pt]
  (jammed) at (10.25,-4.775) {$(8a-Q-1)\mathbin{\mathrm{OR}}1$};
\draw[arr] (7.25,-4.775) -- node[lab,above] {jam} (jammed.west);
\node[align=center] at (10.25,-3.55)
  {$b=Q2^j+r$\\$0<r<2^j,\quad j=d-3$};
\end{tikzpicture}
\caption{Information preserved by compressed alignment.
(a) The bit $\eta$ is at position $j$ before the shift and position zero
afterwards. Jamming summarizes the removed tail without changing $q$, $g$,
or whether the lower suffix is zero; groups are schematic and may be empty.
(b) In the guarded subtraction $x=a2^d-b$, a nonzero remainder borrows
from the prefix. Its complemented tail is still nonzero, so the low bit
is then set. If $r=0$, the prefix is simply $8a-Q$.}
\label{fig:sticky}
\end{figure}

The complete subtraction theorem also covers exact alignment and
reference fallbacks outside the compressed or normal-result branches.
Decoding gives model subtraction; the codec inverse supplies the encoded
equality required by \lean{Certified}. Wide-limb FMA first forms an exact
limb product, then applies the same alignment theorem to the product and
addend. Because the theorem needs no upper significand bound, it handles
the longer product without intermediate rounding.
Appendix~\ref{app:execution} gives the complete dispatcher argument.

\subsection{Certified quotient checking}
\label{sec:division}

The divider checks a proposed answer independently of the algorithm
that found it. Given normalized significands $n,d$ with
$p+1$ bits, where $64<p\leq126$, the two-word backend scales the numerator
by $2^s$, choosing $s=p$ if $n\geq d$ and $s=p+1$ otherwise. A radix-$2^{32}$
implementation of Algorithm~D~\cite{knuth1997taocp2} proposes a quotient
$q$ and remainder $r$. The backend accepts them only after independently
checking
\begin{equation}\label{eq:execution-division-certificate}
 qd+r=n2^s,\qquad r<d.
\end{equation}
These relations characterize Euclidean division, so any representable
pair passing the check suffices for the accepting branch.
The check and fallback supply the same relation to the final rounder
(\Cref{fig:division}).

\begin{figure}[!htb]
\centering
\begin{tikzpicture}[
  font=\small,
  box/.style={draw=black!65,rounded corners=1pt,align=center,
    minimum height=0.9cm,inner sep=4pt},
  arr/.style={-{Stealth[length=1.8mm]},semithick},
  lab/.style={font=\small,fill=white,inner sep=2pt}
]
\node[box,text width=1.45cm] (input) at (0.9,0)
  {Inputs\\$n,d,s$};
\node[box,fill=blue!6,text width=2.1cm] (propose) at (3.65,0)
  {Algorithm D\\proposes $(q,r)$};
\node[box,fill=orange!8,text width=2.95cm] (check) at (7.45,0)
  {Check\\$qd+r=n2^s$\\$r<d$};
\node[box,fill=blue!6,text width=2.7cm] (round) at (11.85,0)
  {Round significand\\using $d,q,r$};
\node[box,text width=2.95cm] (restore) at (7.45,-2.0)
  {Proved restoring\\divider};
\draw[arr] (input) -- (propose);
\draw[arr] (propose) -- (check);
\draw[arr] (check) -- node[lab,above] {accept} (round);
\draw[arr] (check) -- node[lab,right] {reject} (restore);
\draw[arr] (input.south) |- node[lab,pos=0.7,above] {original $n,d,s$}
  (restore.west);
\draw[arr] (restore.east) -| node[lab,pos=0.25,above] {exact $(q,r)$}
  (round.south);
\node[font=\small,align=center] at (6.45,-2.92)
  {Both routes use the same rounding rule: compare $2r$ with $d$, then resolve ties by parity.};
\end{tikzpicture}
\caption{Checked division for $2^p\leq n,d<2^{p+1}$,
$64<p\leq126$, and $s=p+\mathbf{1}[n<d]$.
The accepting branch uses the checked equation and remainder bound.
The fallback establishes them by a restoring-division invariant and needs
no second check. The Boolean check runs during execution; its soundness
proof is erased.}
\label{fig:division}
\end{figure}

The runtime check widens a two-word product to four words, adds the
remainder, rejects a final carry, and checks equality with the shifted
numerator and the bound $r<d$. The zero-carry conjunct lets the proof
interpret this as an exact natural-number equation. The 128-bit argument
bounds already give $qd+r<2^{256}$, so the carry test excludes no
representable pair. The precision bound also places $n2^s$ within four
words.

\begin{theorem}[Soundness of the quotient check]\label{thm:quotient}
Let $64<p\leq126$, $n,d,q,r<2^{128}$, and $s\in\{p,p+1\}$.
If the native certificate accepts $(q,r)$ for the shifted numerator
$n2^s$ and divisor $d$, then
\[
 q=\left\lfloor n2^s/d\right\rfloor,
 \qquad r=(n2^s)\bmod d.
\]
\end{theorem}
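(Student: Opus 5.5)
The plan is to split the argument into two layers: first interpret the Boolean certificate as an exact equation in $\N$, then invoke uniqueness of Euclidean division.

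\emph{Semantics of the native check.} The runtime check computes the four-word product of the two-word values $q$ and $d$. I will use the \lean{toNat} lemmas for the word kernels (\lean{mul64\_toNat} and its two-word analogue) to show that this product represents exactly $qd$. The bounds $q,d<2^{128}$ give $qd<2^{256}$, so no truncation occurs.

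Next, adding $r$ with carry propagation yields a four-word value together with a final carry bit. The addition lemma states that the word value plus $2^{256}$ times the carry equals $qd+r$. The accepting branch requires the final carry to be zero, so the four-word value is exactly $qd+r$ as a natural number.

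The shifted numerator $n2^s$ must also be shown to fit in four words and be represented exactly. Since $n<2^{128}$ and $s\leq p+1\leq127$, we have $n2^s<2^{255}$, so the shift loses no bits. The word-equality test and the comparison $r<d$ then unfold, via the same \lean{toNat} lemmas, to the natural-number facts
\[
qd+r=n2^s \qquad\text{and}\qquad r<d .
\]

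\emph{Arithmetic conclusion.} From $r<d$ we get $d>0$. The uniqueness of quotient and remainder (mathlib's \lean{Nat.div\_mod\_unique}, or directly $q=(qd+r)/d$ since $r<d$) yields $q=\lfloor n2^s/d\rfloor$ and $r=(n2^s)\bmod d$.

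The main obstacle is the first layer: proving that the multiword multiply-add-compare routine, with its limb-level carries, denotes the exact equation. I would handle it by proving a single \lean{toNat} equation for the whole widened product-plus-remainder, with an explicit carry term, and then discharging the carry using the zero-carry conjunct. The hypotheses $64<p\leq126$ enter only through the bound $s\leq127$, which guarantees that $n2^s$ fits in four words.
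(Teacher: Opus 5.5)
Your proposal matches the paper's proof: the multiplication and addition refinement lemmas with the zero-carry conjunct give the exact equation $qd+r=n2^s$, the comparison gives $r<d$ and hence $d>0$, and uniqueness of Euclidean division then identifies $q$ and $r$. One minor imprecision is that the paper's widening lemma for the shifted numerator is stated for $64<s<128$, not only $s\le127$. That affects only how the hypothesis on $p$ is consumed at the word level, not your arithmetic argument.
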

\begin{proof}
The multiplication and addition refinement theorems recover the exact
natural-number equation from the checked word operations and zero carry.
The comparison theorem gives $r<d$, hence $d>0$. Uniqueness of Euclidean
division then identifies both quotient and remainder.
\end{proof}

A rejected candidate falls back to a proved restoring divider satisfying
the same equation and bound. One correctness theorem therefore covers
both branches. The Boolean check executes at runtime; its soundness
proof is erased.

Once the pair is exact, rounding compares $2r$ with $d$ and increments
$q$ when $2r>d$, or when $2r=d$ and $q$ is odd. Capacity hypotheses prevent
overflow when doubling the remainder or incrementing the quotient.
The native rounding theorem identifies the result with exact
nearest-even quotient rounding, after which exponent handling and
packing complete the backend refinement.

\subsection{A rounding bound for a custom format}
\label{sec:affine-proof}

For the custom 24-bit format of \Cref{sec:models}, we prove both encoded
equality and a real-valued rounding bound for $ax+b$:
\par\noindent\begin{minipage}{\linewidth}
\begin{lstlisting}
import FloatLib
open FloatLib.Floats Formats.BinaryInterchange
open ExecFloat.Binary (toModel isFinite)

abbrev F := ExecFloat.Binary 8 15
noncomputable abbrev val (x : F) : ℝ := Model.toReal (toModel x)
def affine (a x b : F) : F := ExecFloat.fma a x b

theorem affine_encoded (a x b : F) :
    affine a x b = ExecFloat.Spec.fma a x b :=
  ExecFloat.Proof.fma_eq_spec a x b
\end{lstlisting}
\end{minipage}\par
The encoded equation holds for every operand. Decoding it through the
codec gives a one-rounding theorem for finite inputs and output:
\par\noindent\begin{minipage}{\linewidth}
\begin{lstlisting}
theorem affine_rounds (a x b : F)
    (ha : isFinite a = true) (hx : isFinite x = true)
    (hb : isFinite b = true) (hout : isFinite (affine a x b) = true) :
    val (affine a x b) =
      Model.roundAt (ExecFloat.Binary.format 8 15) (val a * val x + val b) := by
  have hmodel : toModel (affine a x b) =
      Model.fma (toModel a) (toModel x) (toModel b) := by
    rw [affine_encoded, Model.Proof.fma_eq_spec]
    exact ExecFloat.ModelCodec.decode_liftTernary _ a x b
  simpa only [val, hmodel] using
    Model.toReal_fma_eq_roundAt _ _ _ (by decide) ha hx hb (hmodel ▸ hout)
\end{lstlisting}
\end{minipage}\par
The proof applies \Cref{thm:arithmetic-bridge}; \lean{decide} discharges
the closed format condition. Its hypotheses allow subnormal inputs and need no
separate finiteness assumption on the product. For the exact real
expression $z=\operatorname{val}_F(a)\operatorname{val}_F(x)
+\operatorname{val}_F(b)$, \Cref{thm:rounding-error} then gives
$|\operatorname{val}_F(\operatorname{affine}(a,x,b))-z|
\leq\frac12\ulp_F(z)$.

\subsection{Parsing and Lean's native types}\label{sec:interoperation}

The configured binary types parse decimal, hexadecimal, dyadic, and
special-value text through the same descriptor model.
\lean{parse} defaults to nearest-even, with options for rounding
direction, exception status, and input limits. We prove that successful
bounded parses agree with the unbounded parser and that changing the
storage carrier preserves the result and status.
Formatting provides exact decimal and hexadecimal output, or fixed and
scientific decimal output at a requested precision. For finite values,
the latter two have proved nearest-even error bounds on the value
denoted by the actual output string.

The released library uses Lean~4.34.0. Binary32/64 interoperation connects
to Lean's logical \lean{Float32} and \lean{Float}
models~\cite{lean-v4340-float-model}.
The bridges cover integer constructors, addition and subtraction on
finite inputs, and square root. Conversion to Lean's packed models
canonicalizes NaN signs and payloads while preserving numeric bits,
including the sign of zero; the square-root theorem accounts for this
canonicalization. For example, the addition bridge transports a
left-associated native sum into certified software arithmetic:
\par\noindent\begin{minipage}{\linewidth}
\begin{lstlisting}
import FloatLib
open FloatLib.Floats.ExecFloat.Binary

theorem import_sum3 (x y z : Float32)
    (hx : x.isFinite = true) (hy : y.isFinite = true)
    (hz : z.isFinite = true) (hxy : (x + y).isFinite = true) :
    ofFloat32 ((x + y) + z) =
      (ofFloat32 x + ofFloat32 y) + ofFloat32 z := by
  rw [ofFloat32_add_of_isFinite (x + y) z hxy hz,
    ofFloat32_add_of_isFinite x y hx hy]
  rfl
\end{lstlisting}
\end{minipage}\par
Only the inputs and first sum must be finite; the final sum may overflow.
The equation uses Lean's logical native model and preserves the two
rounding steps.

\subsection{Compilation and runtime assumptions}

Lean erases proof fields when compiling arithmetic. Proved compiler
simplification equations replace selected reference primitives with
faster equal functions, provided the equation is imported before the
caller is compiled. Closed descriptors and direct entry points expose
representation and selection choices to inlining.

These proofs establish equations between Lean functions, including the
logical models of native operations; they do not verify external machine
instructions. Compiled execution relies on Lean's compiler and runtime.
The benchmarked FloatLib paths use proved software kernels. Separate
\lean{NativeFPU.Unchecked} adapters use host arithmetic without the same
equality certificate. The native-C timings in \Cref{sec:evaluation}
provide a hardware baseline.
\section{Format-specific arithmetic}\label{sec:formats}

Each family connects its representation and reference operations to the
execution certificate, while the numerical guarantees in \Cref{tab:formats}
depend on that family's representation and rounding rules.
The posit proofs justify rounding at encoding boundaries and show how
exact updates compose within a finite quire.

\begin{table}[!htb]
\caption{Numerical families in FloatLib and representative guarantees.
Each family states the domain, range, and encoding conditions appropriate
to its operations.}
\label{tab:formats}
\small
\setlength{\tabcolsep}{3pt}
\begin{tabularx}{\linewidth}{@{}p{1.9cm}XX@{}}
\toprule
Family & Representation & Representative contract \\
\midrule
Binary & Parameterized fields, bias, and exceptional encodings
  & Encoded arithmetic refinement; IEEE-style real rounding and exactness \\
Decimal & Coefficient and quantum; BID and DPD encodings
  & Datum-level arithmetic, rounding error, and status \\
Posit & Regime, two exponent bits; \mbox{$n\geq2$}
  & Standard rounding; exact accumulation in a $16n$-bit quire \\
P3109 & Width, precision, signedness, and domain
  & Destination projection, saturation, and mixed-format operations \\
MX & 32 lanes with a common E8M0 scale
  & Quantization at a selected scale; one-round exact dot product \\
Other codes & Fixed point, logarithmic, codebook, and shared scale
  & Scale-aware arithmetic or finite-set quantization \\
\bottomrule
\end{tabularx}
\end{table}

\subsection{Posit rounding thresholds}
\label{sec:posit-rounding}

An $n$-bit posit uses a variable-length regime, up to two exponent bits,
and the remaining fraction bits, concentrating precision near one.
It has one zero and one exceptional value, NaR (``not a real'').
The 2022 standard fixes the exponent convention and specifies rounding
using the format one bit wider~\cite{posit2022standard}.

Let $D_n(u)$ denote the rational value of a nonnegative $n$-bit code $u$.
For adjacent interior codes $u,u+1$, the rounding threshold is
\begin{equation}\label{eq:posit-threshold}
 T_n(u)=D_{n+1}(2u+1).
\end{equation}
\Needspace{5\baselineskip}
Appending a one to the lower code gives the boundary in the wider
format. At equality, the retained even code wins. Zero is exact;
nonzero values below the least positive value round to that value in
magnitude, and values above the greatest finite value saturate. Negative
results use the corresponding symmetric rule.

\begin{figure}[t]
\centering
\begin{tikzpicture}[x=1cm,y=1cm,font=\small,
  axis tick/.style={line width=0.7pt},
  note/.style={align=center,font=\footnotesize,text width=2.8cm}]
\draw[-{Stealth[length=2mm]},black!65] (0.15,0) -- (12.65,0);
\foreach \x in {0.6,2.94,4.5,6.45,12.3}
  \draw[axis tick] (\x,-0.08) -- (\x,0.08);
\node[below=4pt] at (0.6,0) {$1/256$};
\node[below=4pt] at (12.3,0) {$1/16$};
\draw[orange!80!black,dashed] (2.94,0.10) -- (2.94,0.80);
\node[note,above] at (2.94,0.80)
  {Posit threshold\\$T_4(1)=1/64$};
\draw[black!55,dashed] (6.45,0.10) -- (6.45,0.80);
\node[note,above] at (6.45,0.80)
  {Arithmetic midpoint\\$17/512$};
\fill[blue!70!black] (4.5,0) circle (2pt);
\node[below=5pt,text=blue!70!black] at (4.5,0) {$x=3/128$};
\end{tikzpicture}
\caption{Two different rounding boundaries for adjacent four-bit posits.
The values are placed on a linear real axis. The standard posit rule
rounds $x$ to $1/16$ because it lies above the appended-code threshold;
absolute nearest-value rounding would choose $1/256$.}
\label{fig:posit-threshold}
\end{figure}

For standard four-bit posits, codes 1 and 2 denote $1/256$ and $1/16$;
five-bit code 3 gives the threshold $1/64$. Thus $x=3/128$ rounds to
code 2 although code 1 has smaller absolute error
(\Cref{fig:posit-threshold}). Writing $W_n$ for the $n$-bit posit words,
we give $\Pi_n:\R\to W_n$ a separate real specification using the standard's
encoding boundaries, and implement it through a search driven by
comparisons with rational thresholds. This separates the finite code
search from the target's comparison method.

\begin{theorem}[Posit rounding from exact comparisons]
\label{thm:posit-comparison}
Let $n\geq2$, $y\in\R$, and let
$C:\Q\to\{<,=,>\}$ satisfy
$C(t)=\operatorname{cmp}(y,t)$ for every rational $t$.
The signed comparator-based search returns $\Pi_n(y)$.
In particular, the rational executable rounder returns $\Pi_n(q)$
on every rational input $q$.
\end{theorem}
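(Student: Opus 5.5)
The proof separates the finite search from the real specification. First I fix the shape of $\Pi_n$: for $y=0$ it returns the zero code. For $y>0$ it returns the code $u\in\{1,\dots,2^{n-1}-1\}$ determined by three rules. The region $y<T_n(1)$ gives code~1, which covers the clamping of tiny values to the least positive posit. The region $y>T_n(2^{n-1}-2)$ gives the maximal code, which covers saturation. On the interior, $T_n(u-1)<y<T_n(u)$ gives $u$, and at an exact threshold the even code of the pair wins. For $y<0$ the result is the two's-complement negation of $\Pi_n(-y)$.

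The key preliminary lemma is strict monotonicity of the thresholds. Since $D_{n+1}$ is strictly increasing on nonnegative codes, \eqref{eq:posit-threshold} gives $T_n(u)<T_n(u+1)$. Moreover $D_n(u)=D_{n+1}(2u)<T_n(u)<D_{n+1}(2u+2)=D_n(u+1)$, using the fact that appending a zero bit preserves value. Hence the thresholds partition the positive reals into consecutive half-open cells, one per code, with ties only at the rational points $T_n(u)$. This makes $\Pi_n$ well defined, and it ensures that every decision depends on $y$ only through comparisons with rationals.

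Next I analyze the search itself: a sign split followed by a binary (or linear) search over codes. The sign is decided by $C(0)$, which equals $\operatorname{cmp}(y,0)$. For negative inputs I use a mirrored comparator $t\mapsto\operatorname{swap}(C(-t))$, which is exactly $\operatorname{cmp}(-y,\cdot)$. Thus the signed case reduces to the positive search applied to $-y$, followed by negation, and this matches the symmetric rule in $\Pi_n$. For the positive search I state a loop invariant on an interval of codes $[\mathit{lo},\mathit{hi}]$. The invariant says that $y$ lies in the union of the cells of the codes in that interval, with tie information consistent with the cell convention. Each step queries $C(T_n(\mathit{mid}))$. The outcome $<$ moves to the lower half and $>$ moves to the upper half. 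The outcome $=$ terminates with the even code among $\mathit{mid},\mathit{mid}+1$. By the hypothesis on $C$ each branch preserves the invariant, and termination follows from the shrinking interval. At the end a single code remains, and by the partition lemma it equals $\Pi_n(y)$. The boundary cases, code~1 and the maximal code, are the extreme cells, so the general invariant handles clamping and saturation without special treatment.

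The step I expect to be the main obstacle is the alignment between the real-valued specification and the code arithmetic at the edges. This covers $n=2$, where there are almost no interior codes, the regime-length transitions that affect how $D_{n+1}(2u+1)$ is computed, and the even-tie rule near saturation. I would attempt this by proving the partition lemma once, as a statement about $D_{n+1}$ monotonicity and the appended-bit identities, and then deriving all search cases from it. The final claim follows by instantiating the search with $C(t)=\operatorname{cmp}(q,t)$, computed exactly on rationals, which satisfies the hypothesis trivially with $y=q$.
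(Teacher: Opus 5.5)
Your proof is correct in outline and shares the paper's outer structure. Both arguments split on sign via $C(0)$, use the mirrored comparator $t\mapsto\operatorname{swap}(C(-t))$ with whole-word negation, and instantiate with exact rational comparison. The core search, however, is organized differently.

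\textbf{Your route.} You bisect directly on the thresholds $T_n(\mathit{mid})$ and track membership in a cell. This forces you to prove a partition lemma: $T_n$ is strictly increasing and $D_n(u)<T_n(u)<D_n(u+1)$. That lemma makes the cells well defined and identifies them with $\Pi_n$.

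\textbf{The paper's route.} The paper bisects on code values, not thresholds. It locates the floor code $c$, the greatest code with $D_n(c)\le y$. The invariant is $\ell\le c<h$ and $h-\ell\le 2^f$, starting from $[0,2^{n-1})$. It then performs a single threshold test against $T_n(c)$ from \eqref{eq:posit-threshold}, resolving equality by parity. The real specification $\Pi_n$ is phrased through the same lower code and appended-bit threshold. The comparator hypothesis therefore makes the executable and real searches take identical branches. The zero, minpos, and saturation branches are treated separately and agree for the same reason.

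\textbf{Comparison.} The paper's search needs only monotonicity of $D_n$. Its fuel invariant gives an explicit bound of $n$ steps, which your ``shrinking interval'' argument leaves implicit. It also matches the library's actual query pattern: code-value queries followed by one final threshold test, which is exactly what the elementary-function comparators are built to supply. Your version gives a cleaner geometric characterization of $\Pi_n$ and absorbs clamping and saturation into the extreme cells. The cost is the interleaving lemma, and an invariant that would need restating in terms of the floor code to certify the library's specific search.
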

\begin{proof}
For an interior positive target, let $c$ be the greatest code whose value
does not exceed it. The bisection invariant is
$\ell\leq c<h$ and $h-\ell\leq2^f$, where $f$ is the number of remaining
steps. Starting from $[0,2^{n-1})$, $n$ steps suffice to isolate $c$.
The comparator equation makes the executable and real searches take
identical branches, finding the same lower code and agreeing at the
threshold \eqref{eq:posit-threshold}, including equality and its parity
test. The zero, minpos, and saturation branches agree by the same
equation. For a negative target, reversing comparisons with negated
thresholds gives the magnitude comparator; whole-word negation restores
the sign. Rational comparison supplies the equation for rational targets.
\end{proof}

\subsection{Exact comparison for elementary functions}
\label{sec:elementary}

For a rational radicand $a\geq0$ and rational threshold $t\geq0$,
$t\leq\sqrt a\Longleftrightarrow t^2\leq a$ supplies an exact comparison.
Negative thresholds lie below the root and are decided directly.
The equivalence and its strict counterpart determine the code and threshold
tests in \Cref{thm:posit-comparison}, proving standard posit rounding of
the exact root, including ties and extreme intervals.

Exponential and logarithm use rational enclosures $I_k=[L_k,U_k]$ of the
real target $y$, at approximation degrees $2^k$. Refinement continues until
$U_k<t$ or $t<L_k$ decides the comparison with a rational threshold $t$
(\Cref{fig:posit-enclosure}).
Containment makes each reported ordering sound; convergence of both
endpoints to $y$ ensures eventual separation when $y\neq t$.
The equalities $\exp(0)=1$ and $\log(1)=0$ are handled directly.
For rational $q\neq0$, $\exp q$ is
irrational, as is $\log q$ for $q>0$, $q\neq1$, excluding equality with
rational thresholds in the remaining cases. The comparator supplies
both code-value queries and the final threshold test in
\Cref{thm:posit-comparison}: for an ordinary word $v$ denoting $q$,
\[
 \operatorname{exp}_n(v)=\Pi_n(\exp q),
 \qquad
 \operatorname{log}_n(v)=\Pi_n(\log q)\quad(q>0).
\]
Rational powers combine logarithm enclosures with exact integer-power
comparisons; the latter resolve inconclusive enclosures and decide
equality. Roots, trigonometric functions, and hyperbolic functions have
corresponding domain-specific comparison proofs.

\begin{figure}[!htb]
\centering
\begin{tikzpicture}[x=1cm,y=1cm,font=\small,
  enclosure/.style={draw=blue!70!black,line width=1.2pt},
  bound/.style={text=blue!70!black,below=4pt},
  stage/.style={anchor=west,align=left,inner sep=0pt},
  query/.style={draw=black!55,rounded corners=1.5pt,
    align=center,inner sep=5pt,minimum height=0.75cm}]
\node[stage] at (0,0) {Unresolved\\$t\in[L_k,U_k]$};
\node[stage] at (0,-1.65) {Separated ($j>k$)\\$U_j<t$};
\draw[-{Stealth[length=1.8mm]},black!65]
  (0.6,-0.50) -- node[right=3pt] {refine} (0.6,-1.10);

\foreach \h in {0,-1.65}
  \draw[-{Stealth[length=1.8mm]},black!45]
    (2.9,\h) -- (9.25,\h);

\draw[enclosure] (3.9,0) -- (8.5,0);
\foreach \x in {3.9,8.5}
  \draw[enclosure] (\x,-0.11) -- (\x,0.11);
\node[bound] at (3.9,0) {$L_k$};
\node[bound] at (8.5,0) {$U_k$};

\draw[enclosure] (3.35,-1.65) -- (5.75,-1.65);
\foreach \x in {3.35,5.75}
  \draw[enclosure] (\x,-1.76) -- (\x,-1.54);
\node[bound] at (3.35,-1.65) {$L_j$};
\node[bound] at (5.75,-1.65) {$U_j$};

\foreach \h in {0,-1.65}{
  \fill[blue!70!black] (5,\h) circle (1.9pt);
  \node[above=4pt,text=blue!70!black] at (5,\h) {$y$};
}
\draw[orange!80!black,dashed,line width=0.8pt]
  (6.7,0.45) -- (6.7,-2.02);
\node[above=3pt,text=orange!80!black] at (6.7,0.45)
  {$t=T_n(u)$};

\node[query] (search) at (12.6,-1.65) {Posit\\code search};
\draw[-{Stealth[length=1.8mm]},black!65]
  (9.45,-1.65) -- node[above=3pt,text=black] {$C(t)={<}$} (search.west);
\end{tikzpicture}\hspace*{\fontdimen2\font}
\caption{Enclosure comparison at an interior posit threshold
$t=T_n(u)$ from \eqref{eq:posit-threshold}. Both rational intervals
contain $y$. The first test is unresolved; the later bound $U_j<t$
certifies $y<t$ for the code search. The stages are schematic; neither
nesting nor a convergence rate is assumed.}
\label{fig:posit-enclosure}
\end{figure}

The optional binary elementary-function kernels compute deterministic
approximations; their equality certificates do not establish correct
rounding of the real function. Separate rational exponential
and logarithm enclosures provide proved containment bounds.
Appendix~\ref{app:rounding} proves separation and describes enclosure reuse
and logarithmic comparison for large exponential arguments.

\subsection{Exact quire accumulation}
\label{sec:quire}

The standard quire for an $n$-bit posit is a signed
$16n$-bit accumulator with unit $\Delta_n=2^{16-8n}$.
Every product of two ordinary $n$-bit posits is an integer multiple of
this unit, so products and partial sums remain exact while the accumulator
stays in range. Conversion to a posit rounds the final sum.
Write $\operatorname{val}_n(v)$ and $\operatorname{Q}(A)$ for the rational
values of an ordinary posit and quire, and let
$\operatorname{foldMulAdd}(0,\mathcal P)$ fold the model's exact product
update from zero over operand pairs $\mathcal P$.

\Needspace{8\baselineskip}
\begin{theorem}[Exact quire accumulation]\label{thm:quire}
For a standard posit width $n\geq2$, let
$\mathcal P=[(a_0,b_0),\ldots,(a_{k-1},b_{k-1})]$ contain no NaR operands.
If $k<2^{31}$, then the fold returns an ordinary quire and
\[
 \operatorname{Q}(\operatorname{foldMulAdd}(0,\mathcal P))
     =\sum_{i<k}\operatorname{val}_n(a_i)\operatorname{val}_n(b_i).
\]
\end{theorem}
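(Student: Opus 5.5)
The plan is an induction over the operand list, carrying an invariant that bounds the accumulator in units of $\Delta_n$. We write $\operatorname{Q}(A)=N_A\Delta_n$ for an ordinary quire $A$, where $N_A$ is the signed integer stored in the $16n$-bit word. The invariant after $i$ steps has three parts: the fold is still ordinary; $N_{A_i}\Delta_n=\sum_{j<i}\operatorname{val}_n(a_j)\operatorname{val}_n(b_j)$; and $|N_{A_i}|\leq i\cdot M$, where $M$ bounds $|\operatorname{val}_n(a)\operatorname{val}_n(b)|/\Delta_n$ over ordinary posits. The base case is the zero quire. Generalizing the fold from an arbitrary accumulator satisfying the invariant makes the list induction go through.

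Two arithmetic lemmas come first, each uniform in $n\geq2$.

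\textbf{Representability of products.} Every ordinary $n$-bit posit has the form $m\,2^{e}$ with $m\in\Z$ and $e\geq -4(n-2)$, since the least positive value is $\mathrm{minpos}=2^{-4(n-2)}$ with two exponent bits. A product therefore has exponent at least $-8(n-2)=16-8n$. Hence $\operatorname{val}_n(a)\operatorname{val}_n(b)=k\Delta_n$ for an integer $k$. I would prove this by decoding the regime, exponent and fraction into a dyadic and bounding the scale from below by that of minpos.

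\textbf{Magnitude.} The largest posit is $\mathrm{maxpos}=2^{4(n-2)}$, so $|k|\leq \mathrm{maxpos}^2/\Delta_n=2^{16n-32}$.

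The inductive step unfolds the model's exact product update. Because neither operand is NaR, the update takes the ordinary branch. The first lemma shows it adds exactly $k_i$ units, with no rounding. The step must also show that the signed $16n$-bit range, $|N|<2^{16n-1}$, is not exceeded. By the invariant, $|N_{A_{i+1}}|\leq (i+1)2^{16n-32}\leq 2^{31}\cdot2^{16n-32}=2^{16n-1}$. This endpoint is exactly the limit case, so the bound needs care. The hypothesis $k<2^{31}$ gives $i+1\leq k\leq 2^{31}-1$, and hence $|N|\leq(2^{31}-1)2^{16n-32}<2^{16n-1}$. Strictness follows, and the update never produces the NaR pattern $-2^{16n-1}$ or wraps.

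The main obstacle is matching the model's concrete update, which is likely two's-complement addition on a $16n$-bit word with a NaR/overflow check, to integer addition. I would first prove a refinement lemma: if $|N+k|<2^{16n-1}$, the word update equals encoding $N+k$ and returns an ordinary quire. This mirrors the carry-free arguments of \Cref{thm:quotient}. The representability lemma may also be fiddly for small $n$, where the fraction field is empty. I would handle it through the generic decoder's dyadic form rather than by cases on $n$. Summing the invariant at $i=k$ then gives the stated equation.
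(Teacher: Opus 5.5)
Your proposal is correct and follows essentially the same route as the paper. Both use induction on the prefix with the invariant $|A_j|\le jB_n$ for $B_n=2^{16n-32}$, take the inductive step by the triangle inequality, and use $k<2^{31}$ to get $(j+1)B_n\le(2^{31}-1)B_n<2^{16n-1}$, which keeps each coefficient representable and distinct from NaR before the one-step exactness theorem is applied; your generalization to an arbitrary starting accumulator matches the paper's appendix version, which tracks a consumed count $m$ with $|A|\le mB_n$ and $m+k<2^{31}$.
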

\begin{proof}
The largest posit magnitude is $2^{4(n-2)}$, so a product has magnitude
at most $2^{8n-16}$, or $B_n=2^{16n-32}$ in quire units.
We induct on $j$, maintaining exactness and $|A_j|\leq jB_n$ for the
quire coefficient. The base case is $A_0=0$. If the next product has
coefficient $C_j$, then
$|A_j+C_j|\leq |A_j|+|C_j|\leq(j+1)B_n$.
For every prefix $j+1\leq k<2^{31}$, this bound is strictly below
$2^{16n-1}$. The next coefficient is therefore representable and distinct
from the reserved NaR word, so the one-step arithmetic theorem preserves
exactness and completes the induction.
\end{proof}

The bound is uniform in $n$ because quire capacity and product bounds
grow together; it holds in every ordering without relying on cancellation.
Longer sums can succeed when their actual partial sums remain in range.

The configured update decodes operands through their codec and calls the
model update. The codec laws transfer the fold result to
\lean{toRat?\_foldl\_qMulAdd\_zero}, whose successful rational decoding
also excludes NaR. For the resulting quire $A$, the model conversion
theorem and codec round trip give
\[
 \operatorname{toModel}(\operatorname{qToP}(A))
   =\operatorname{roundRat}_n
       \left(\sum_{i<k}\operatorname{val}_n(a_i)
                         \operatorname{val}_n(b_i)\right).
\]
By \Cref{thm:posit-comparison}, this is one standard posit rounding of the
exact dot product.

\subsection{Binary, decimal, and scaled formats}

\para{Low-precision binary.}
Small binary formats reuse field descriptors with different exceptional
encodings~\cite{onnx-float8,ocp-mx-2023}. E5M2 and bfloat16 use IEEE
exceptional conventions; E4M3FN has NaN patterns but no infinity.
FNUZ reserves the negative-zero word for its unique NaN and changes the
bias. MX's FP4 and FP6 element formats have neither NaN nor infinity.
The reference arithmetic applies the policy when packing results.
For encodings in which every word is finite, such as FP4 and FP6,
directed add, subtract, and multiply bounds require the exact result to
lie within the finite range; these theorems do not cover the NaN-bearing
FN and FNUZ policies.

A binary operation on byte-sized values admits a complete table under
the same execution certificate. An eight-bit FMA table would require
$2^{24}$ entries; the released FMA implementation uses the exact baseline.

\para{Decimal arithmetic.}
Decimal datums retain the quantum exponent $q$ as well as the value
$(-1)^s c\,10^q$: $1.50$ and $1.5$ are distinct, so BID and DPD encode
both coefficient and quantum exponent~\cite{ieee754-2019}. FloatLib supports
decimal32, decimal64, decimal128, and parameterized layouts with
$3d+1$ digits, positive exponent-continuation width, and arbitrary integer
bias. Arithmetic contracts preserve datum validity and status;
nearest-even and nearest-away error bounds are half the selected decimal
grid unit for nonoverflowing results. Division requires a nonzero divisor;
square root requires a nonnegative finite input and retains the
no-overflow premise for general descriptors.

\para{P3109.}
P3109 operations follow working-group report 4.0.3~\cite{p3109}, with
independent source and destination descriptors. They evaluate finite inputs
exactly, then project into the destination under explicit rounding and
saturation choices. Deterministic modes have direction and error results
before saturation; the full projection theorem includes saturation.
Stochastic modes take random bits explicitly and establish grid membership,
which alone does not imply unbiasedness.

\para{Microscaling.}
MX couples 32 lanes through a shared E8M0 scale
\cite{ocp-mx-2023}. At a selected finite scale, saturating conversion
minimizes each lane's error among finite encodings; the contract does not
assert a global optimum over scales. For finite elements and scales,
we choose exact product accumulation followed by one binary32 rounding
for the MX dot product.

\para{Fixed point and other encodings.}
Fixed-point, logarithmic, codebook, and generic shared-scale values use
the representation interface of \Cref{sec:representations}.
Multiplying fixed-point codes at scales $d_1,d_2$ produces scale
$d_1+d_2$; codebook quantization minimizes distance over the table's
finite entries when at least one exists. Each family states its own
scale and range conditions.
\section{Evaluation}\label{sec:evaluation}

On the fixed scalar workload, FloatLib has lower medians than Universal
in 34 of 65 matched posit cases, including all five operations at 64 bits
and all twenty cases at widths 512--4096. The 31 slower cases include
all twenty at widths 5--8. All 84 binary medians exceed MPFR's.
Separate Lean comparisons measure P3109 arithmetic and scalar conversions
(\Cref{sec:lean-comparisons}); independent numerical checks follow.

\para{Formats and execution.}
We time addition, subtraction, multiplication, division, square root,
and FMA at binary widths
$4,5,6,7,8,16,32,64,128,256,512,1024,2048,4096$.
Widths 4--8 use E2M1, E2M2, E3M2, E3M3, and E4M3: E counts exponent
bits and M stored fraction bits. Widths 16--128 use IEEE interchange
layouts; larger widths use 19 exponent bits and $w-20$ fraction bits,
hence significand precision $P=w-19$ including the leading bit.
Posits add widths 2 and 3, with $es=2$ throughout. Their precision varies
with magnitude: a 32-bit posit can have 28 significand bits, against
binary32's 24. Equal encoded width does not imply equal accuracy.

Binary32/64 call proved word kernels directly; other binary widths and
all posits use public throughput-policy dispatch. For binary addition,
\path{backend-regimes.csv} records the paths described in
\Cref{sec:execution}: exhaustive tables at 4--8 bits, word kernels at
16--64, an internal fixed-limb kernel at 128, and the exact baseline at
256--4096. Wide rows use the default carrier; the optional limb-array
carrier was not measured.

\para{Workload and measurement.}
Each operation uses sixteen tuples rounded before timing
(\Cref{app:evaluation-details}). A result fingerprint selects the next
tuple, creating a dependency without accumulating arithmetic results.
Timing includes fixture selection, arithmetic, representation handling,
and result observation; construction, compilation, and proof checking
are excluded. Sources, trials, and environment details accompany the
\href{https://github.com/lean-dojo/FloatLib/tree/main/benchmarks/results}{evaluation data}.

Measurements use an Intel Xeon Platinum 8488C with affinity to one
logical CPU; affinity does not reserve the physical
core. C adapters use \texttt{-O3 -march=native} without fast-math;
SoftFloat uses \texttt{-O2}. Each cell has nine fresh-process trials,
with 256 warmup iterations for plotted implementations. A pilot targets
200\,ms; accepted measurements exceed 50\,ms. The record contains
448 cells and 4,032 observations. We report medians of elapsed time per
iteration and linearly interpolated empirical 5th--95th percentiles
across trials, not confidence intervals.

\para{Reference configurations.}
MPFR~4.2.0 rounds exact rational fixtures nearest-even
\cite{fousse2007mpfr}, with $\mathit{emin}=2-B-(P-1)$ and
$\mathit{emax}=B+1$ for significand precision $P$ and exponent bias $B$.
Every conversion and arithmetic operation passes its ternary rounding
result to \texttt{mpfr\_subnormalize} to match gradual
underflow~\cite{mpfr-manual}.
SoftFloat revision \texttt{a0c6494}, with the \texttt{8086-SSE} specialization,
and native C cover binary32/64 \cite{hauser-softfloat}.
Native C checks representations, rounding mode, and fixture outputs
against MPFR. Binary fingerprints and fixture traces agree on an untimed
256-step prefix; independently calibrated timed loops can differ in length.
Native C supplies a loop baseline, not instruction latency.

Universal revision \texttt{26e69f5} receives the same encoded posit inputs
and checks all sixteen fixture results \cite{omtzigt2018universal}.
Of 78 operation/width checks, 71 pass; the seven failures are square roots
at widths 64--4096. The six passing square-root configurations use host
binary64 and are also excluded, leaving 65 software comparisons.
We exclude the Flocq~4.2.2 wrapper's 84 timing cells from both plots:
unmatched exponent bounds and separately rounded fixture numerators and
denominators can change the dependent sequence
(\Cref{app:evaluation-details}). This concerns the wrapper, not Flocq's
arithmetic theorems \cite{boldo2011flocq,boldo2017flocq}.

\para{Arithmetic performance.}
\Cref{fig:performance} shows absolute costs across widths.
Binary32 addition takes 138.6\,ns in FloatLib, against 34.1\,ns in MPFR
and 22.9\,ns in SoftFloat; all six binary32 medians exceed both references.
At binary64, FloatLib addition rises to 312.8\,ns and division from
123.7 to 1,447.3\,ns. The overhead varies by operation and width.

\begin{figure}[!htb]
\centering
\includegraphics[width=\linewidth]{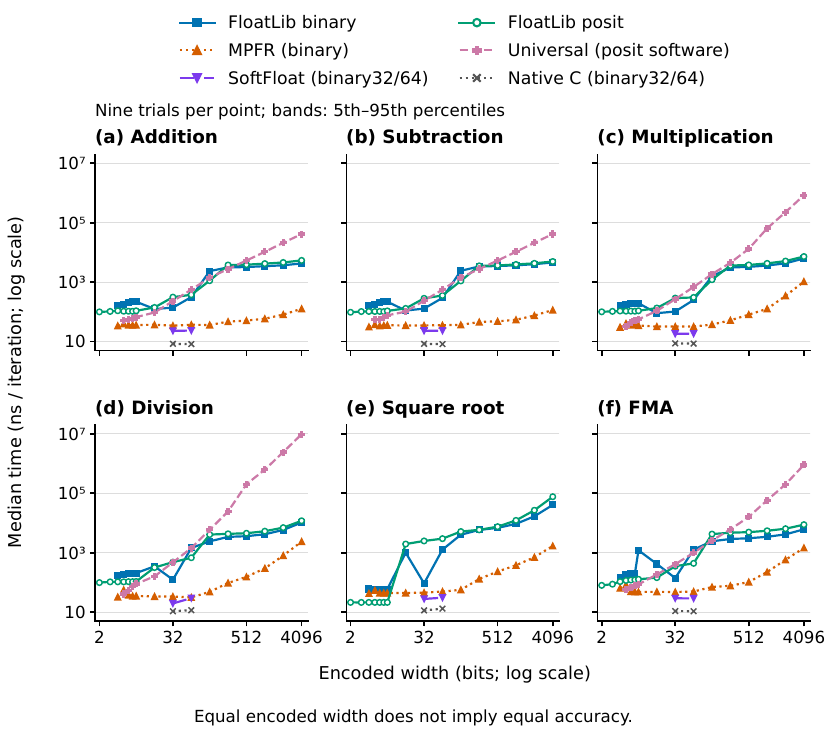}
\caption{Absolute scalar costs across encoded widths. Markers are medians
of nine trials; bands show empirical 5th--95th percentiles. Both axes are
logarithmic. Timing includes input selection and result observation.
MPFR matches binary precision and range; Universal uses the same posit
words and $es=2$. Universal host-assisted square roots and the unmatched
Flocq wrapper are excluded. Equal width does not imply equal accuracy.}
\label{fig:performance}
\end{figure}

\Cref{fig:performance-ratios} compares costs within each family.
Posit multiplication takes 288.9 versus 266.5\,ns at 32 bits,
305.1 versus 684.9\,ns at 64 bits (Universal/FloatLib $2.25$), and
7.25 versus 839.63\,$\mu$s at 4096 bits.
The accompanying data give all 149 matched ratios and trials.
Fixed fixtures and selected paths do not characterize arbitrary
cancellation, exponent gaps, exceptional inputs, or application performance,
nor isolate allocation, representation, and arithmetic costs.

\begin{figure}[!htb]
\centering
\includegraphics[width=\linewidth]{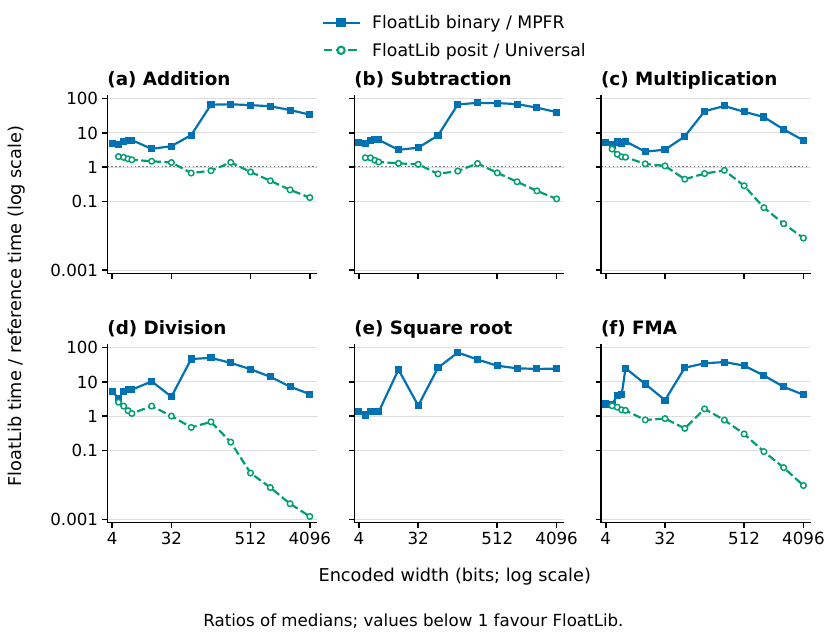}
\caption{Relative costs within each family: FloatLib binary/MPFR and
FloatLib posit/Universal software. Each point divides two nine-trial
medians; values below one favour FloatLib. Both axes are logarithmic.
The 149 ratios comprise 84 binary and 65 posit comparisons; Universal
square roots are excluded as explained above. Lines connect measured
widths; ratios have no uncertainty bands.}
\label{fig:performance-ratios}
\end{figure}

\subsection{Comparisons with Lean libraries}\label{sec:lean-comparisons}

Separate experiments compare P3109 arithmetic with FLoPS and scalar
low-precision conversions with
TensorLib~\cite{flops2026artifact,tensorlib2026}.
On a Xeon Platinum 8275CL with affinity to one logical CPU, each case has
nine paired fresh-process trials in alternating order, 256 warmup
iterations, and sixteen shared fixtures. Results select subsequent inputs;
timing includes lookup, the operation, encoded output, and checksums,
excluding input construction, startup, and file output.
The libraries use their respective supported toolchains; the ratios
include compiler differences.

\para{P3109 arithmetic.}
FloatLib has lower medians in nine of twelve cases
(\Cref{tab:p3109-flops}); FLoPS is faster for four-bit addition,
multiplication, and FMA. FLoPS division costs $1.32$--$1.46$ times
FloatLib's median. Eight-bit FMA differences are small: the empirical
5th--95th percentile intervals overlap for \lean{Binary8p3se}.
Both execute software arithmetic: FloatLib uses exact rationals, FLoPS
dyadics and quotient-based rounding. These timings do not isolate
representation costs.

\begin{table}[!htb]
\centering
\small\setlength{\tabcolsep}{5pt}
\caption{P3109 arithmetic: median $\mu$s per dependent-loop iteration across nine trials on a Xeon 8275CL. Inputs and nearest-even, nonsaturating policies match; the libraries use their respective Lean toolchains.}
\label{tab:p3109-flops}
\begin{tabular}{@{}llrrrr@{}}
\toprule
Format & Library & Add & Multiply & Divide & FMA \\
\midrule
Binary4p2sf & FloatLib & 4.37 & 4.47 & 4.32 & 5.86 \\
Binary4p2sf & FLoPS & 4.29 & 4.13 & 5.71 & 5.16 \\
\addlinespace[2pt]
Binary8p4se & FloatLib & 4.89 & 4.96 & 4.94 & 6.40 \\
Binary8p4se & FLoPS & 5.57 & 5.42 & 7.24 & 6.66 \\
\addlinespace[2pt]
Binary8p3se & FloatLib & 4.79 & 4.99 & 4.91 & 6.31 \\
Binary8p3se & FLoPS & 5.43 & 5.29 & 7.00 & 6.40 \\
\bottomrule
\end{tabular}
\end{table}

The formats are \lean{Binary4p2sf}, \lean{Binary8p4se}, and
\lean{Binary8p3se}, with nearest-even rounding and no saturation.
Addition, multiplication, and division agree on every ordered encoded
pair, including exceptional values. FMA agrees on every four-bit triple
and, for each eight-bit format, every encoded pair $(x,y)$ with third
operand code $(17x+29y+43)\bmod256$. Together these checks compare
529,152 results with no bit differences; all 192 timing fixtures also agree.

\para{Low-precision conversion.}
FloatLib/TensorLib median ratios range from $5.9$ to $13.0$
(\Cref{tab:tensorlib}). All sixteen timed inputs are normal and exactly
representable; subnormal and exceptional cases are unmeasured.
TensorLib specializes normal conversions using bit-field operations;
FloatLib uses shared format-descriptor casts. Tensor arithmetic uses
host floats (\Cref{sec:related}).

\begin{table}[!htb]
\centering
\small\setlength{\tabcolsep}{5pt}
\caption{Low-precision conversion: median ns per dependent-loop iteration across nine trials on a Xeon 8275CL. Encoding converts binary32 to the listed format; decoding converts back. The sixteen timed inputs are normal finite values.}
\label{tab:tensorlib}
\begin{tabular}{@{}lrrrr@{}}
\toprule
 & \multicolumn{2}{c}{Encode} & \multicolumn{2}{c}{Decode} \\
\cmidrule(lr){2-3}\cmidrule(l){4-5}
Format & FloatLib & TensorLib & FloatLib & TensorLib \\
\midrule
FP16 & 2201.0 & 265.3 & 2098.2 & 268.9 \\
bfloat16 & 2219.9 & 252.6 & 1590.0 & 268.3 \\
E4M3FN & 3642.9 & 279.7 & 2187.5 & 272.9 \\
E5M2 & 2248.1 & 267.9 & 2083.1 & 270.5 \\
\bottomrule
\end{tabular}
\end{table}

\Needspace{6\baselineskip}
Conversions connect binary32 with FP16, bfloat16, E4M3FN, and E5M2.
Decoding checks all destination encodings; encoding checks finite
destination values, adjacent midpoints and binary32 neighbours, both signs,
exponent and overflow boundaries, NaNs, and 8,192 seeded random words per
format. Both libraries match an independent nearest-even reference in all
952,612 cases, comparing non-NaN bits and NaN class.
The 2,964 E4M3FN NaN-sign differences occur on overflow and infinity:
FloatLib returns \texttt{7f}, TensorLib \texttt{ff}.
Status flags and NaN payloads are not compared.

\subsection{Independent numerical checks}

\Cref{tab:evaluation-comparisons} distinguishes counts and comparison
relations. The TestFloat adapter calls logical \texttt{Model.*WithStatus}
operations, not every timed backend. It checks non-NaN words and five
flags; NaNs agree by quiet/signalling class, ignoring payload and sign
\cite{hauser-testfloat}. Arithmetic and conversion cases use four rounding
directions and tininess after rounding, excluding binary80, nearest-away,
round-to-odd, and tininess-before-rounding.
FMA accounts for 98,131,968 cases, or 95.78\% of the TestFloat total.

\begin{table}[!htb]
\centering
\small\setlength{\tabcolsep}{3pt}
\caption{Independent numerical comparisons. Counts include repeated
evaluations; zero denotes agreement under the stated relation.
Combined SoftPosit counts include the initial comparison.
Decimal evidence consists only of summaries.}
\label{tab:evaluation-comparisons}
\begin{tabular}{@{}>{\raggedright\arraybackslash}p{.17\linewidth}
  >{\raggedright\arraybackslash}p{.39\linewidth}
  r>{\raggedright\arraybackslash}p{.19\linewidth}@{}}
\toprule
Reference & Tested property & Evaluations & Differences \\
\midrule
TestFloat & Binary model words and five flags & 102,454,320 & 0 \\
MPFR & Primitive rounding; exact reductions & $400+672$ & 0 \\
MPFR, binary16 & Two fixed-first-operand rows & 131,072 & 0 \\
IBM FPgen & Selected binary32 results and flags & 81,513 & 0 \\
Z3 QF\_FP & Result bits; NaN class & 27,492 & 0 \\
Decimal references & Complete datum and five flags & 13,980 & 0 reported \\
SoftPosit, initial & Public posit results/comparisons & 23,718,760 & 7,562; 5 inputs \\
SoftPosit, combined & Public posit results/comparisons & 32,893,800 & 7,564; 7 inputs \\
\bottomrule
\end{tabular}
\end{table}

MPFR correctness checks use version~4.1.0. The binary16 rows enumerate
second operands for negative-zero addition and signalling-NaN
multiplication, not all operand pairs. IBM admits 81,513 of 130,471
records after excluding decimal, missing-result, trap, and incompatible
NaN/underflow cases. Z3 compares 25,168 results by bits and 2,324 NaNs by
class. Separate ONNX and P3109 table checks concern decoding only.
Counts and whole-suite runtimes supply no arithmetic speed ratio.

Decimal summaries report 10,425 arithmetic comparisons, 2,775 square
roots bounded by outward MPFR intervals and rounded independently with
libmpdec, and 780 exact or special-value cases: 13,980 total.
They observe sign, coefficient, quantum exponent, special values, and
five flags, distinguishing cohorts such as $1.50$ and $1.5$.
Multiple NaNs follow the documented first-NaN choice; unspecified
oracle-generated NaN signs are normalized.
The released artifact includes summary reports for these comparisons,
but not the raw vectors, runners, or tested source revision needed to
replay them.

\para{SoftPosit discrepancies.}
SoftPosit revision \texttt{17d5628185b3} supplies ten arithmetic and
comparison operations. Initial checks exhaust word tuples at widths
2--8 and sample widths 16 and 32. Sampling widths 9--15 adds 9,175,040
evaluations and two differences. Excluding repeated 16/32-bit cells gives
32,893,800 evaluations and 7,564 differences across seven distinct
input tuples; 7,560 occurrences repeat three 32-bit inputs.

Four distinct inputs involve FMA. At width 14,
$a=\mathtt{09f4}$, $b=\mathtt{1f60}$, and $c=\mathtt{0001}$ decode to
$253/2048$, $2^{19}$, and $2^{-48}$.
Thus $ab+c=64768+2^{-48}$. Adjacent words \texttt{1efe} and
\texttt{1eff} represent 64512 and 65024, with appended-bit rounding
threshold 64768, also their midpoint here \cite{posit2022standard}.
Their errors are $256+2^{-48}$ and $256-2^{-48}$: FloatLib correctly
chooses the odd upper word; SoftPosit chooses the even lower word.
Its generic FMA records discarded bits, then overwrites that sticky
information at final rounding, turning a non-tie into a tie.
This illustrates the obligation in \Cref{sec:math-sticky}: the
discarded suffix's nonzeroness must survive until rounding.

The other three inputs concern generic 32-bit addition, multiplication,
and division. Packing has one exponent bit left, so exponent value two
must contribute its upper bit, $2\mathbin{\texttt{>>}}1=1$.
SoftPosit instead shifts by $28-29=-1$, undefined in C/C++.
Exact decoding and the standard's thresholds select FloatLib's results;
SoftPosit's named \texttt{p32} operations agree.
On all seven inputs, public FloatLib operations agree with its exact
specification. Recorded source probes confirm the two C diagnoses on
these inputs, without proving a repaired implementation correct.
\section{Related work}\label{sec:related}

\para{Shared numerical foundations.}
Flocq is our principal mathematical precedent. Boldo and Melquiond's
radix-and-exponent abstraction develops representability, rounding, and
error results once for several numerical
formats~\cite{boldo2011flocq,boldo2017flocq}. We follow it and its
separation of generic rounding from concrete arithmetic.
Flocq~4.2.2 also provides verified executable binary operators, including
exceptional values, bit encodings, the four arithmetic operations, square
root, and FMA, parameterized by precision and exponent
range~\cite{flocq422}. FloatLib adds representation and quantization
interfaces without a radix assumption (\Cref{sec:representations}).
Its alternative software kernels share an encoded-result specification,
so numerical proofs remain valid when storage or algorithms change.

FloatSpec brings Flocq-style theory, IEEE models, and Hoare-style
specifications to Lean~\cite{floatspec2026}. Its
\lean{Calc.Operations.Fplus} aligns integer significands and adds them;
\lean{F2R\_plus} proves that the real value is the exact sum.
The rounded IEEE interfaces, including \lean{Bplus}, \lean{Bmult},
\lean{Bdiv}, and \lean{Bsqrt}, are noncomputable definitions in the cited
version. FloatSpec also proves bridges between its binary64 operations
and Lean's logical \lean{Float.Model}; these connect specifications
without verifying native machine instructions.
Timing the executable \lean{Fplus} in place of a rounded operation would
omit the destination rounding and packing measured for FloatLib.
FloatLib carries the rounded encoded specification through executable
backend selection.

\para{P3109 semantics, algorithms, and execution.}
FLoPS, by Chang, Park, Lim, and Nagarakatte, is the closest related Lean
library for P3109 arithmetic~\cite{chang2026flops,flops2026artifact}.
It connects bits to algebraic values and extended-real semantics with
NaNs, with explicit rounding, saturation, and finite-domain policies.
Its executable artifact proves refinement for addition, multiplication,
division, and FMA: \lean{add\_refines}, for example, equates the decoded
result with projection of the exact semantic sum~\cite{flops2026artifact}.
FloatLib likewise connects executable arithmetic to numerical semantics,
and provides same-format and mixed-format P3109 operations through its
common execution interfaces.

FLoPS also studies what familiar numerical algorithms retain when the
format changes. Its FastTwoSum analysis states the conditions under which
error-free decomposition survives saturation. Its ExtractScalar analysis
shows that a magnitude bound on the extracted leading part can fail at
precision one, where tie behavior differs from the usual even-significand
rule~\cite{chang2026flops}.
Its stochastic-rounding theorems give finite-randomness bias formulas and
root-mean-square bounds for sums of rounding errors under its sampling
assumptions~\cite{flops2026artifact}. FloatLib's explicit-random-bit
refinement and grid-membership results establish the implemented choice
and its admissible outputs; they do not supply that probabilistic analysis.

The specifications follow different working-group reports: FLoPS uses
3.2.1 and FloatLib uses 4.0.3~\cite{chang2026flops,p3109}.
We compare their common nearest-even, nonsaturating operations on
matching formats.
\Cref{sec:lean-comparisons} reports the numerical checks and matched
timings.

\para{Native arithmetic and tensor storage.}
Primitive Floats exposes hardware binary64 arithmetic in Coq through
axioms connecting primitive operations to their logical
specification~\cite{bertholon2019primfloat}.
Martin-Dorel, Melquiond, and Roux connect these primitives to Flocq and
use them in CoqInterval and ValidSDP~\cite{martindorel2023enabling}.
FloatLib's opt-in \lean{NativeFPU.Unchecked} path similarly uses host
binary32/64 operations, without the software-refinement certificate
carried by our certified kernels.

TensorLib addresses the needs of tensor programs: dtypes, byte storage,
shapes, strides, indexing, and broadcasting~\cite{tensorlib2026}.
Its low-precision scalar arithmetic commonly decodes an FP8, FP16, or
bfloat16 value into \lean{Float32}, performs the native operation, and
encodes the result; binary32/64 use Lean's native floating-point types
directly. FloatLib's certified execution instead relates software kernels
to their packed specifications.
The scalar-conversion comparison in \Cref{sec:lean-comparisons} times
normal-value integer bit operations, so it does not measure an FPU
advantage.

\para{From arithmetic to expression and program proofs.}
Gappa combines interval reasoning, rewriting, and forward error analysis
to bound expressions containing exact and rounded operators, producing
proof certificates checkable in Coq~\cite{daumas2010gappa}.
VCFloat2 automates roundoff analysis through reification and proof by
reflection; its specifications accommodate user-defined operations and
nonstandard formats, and its bounds can be combined with VST program
proofs~\cite{appel2024vcfloat2}.
Fluctuat uses interval and affine-set abstractions to track real values,
finite-precision values, and error sources through programs with loops.
It detects unstable tests, but the error analysis in the cited work
assumes matching real and finite-precision control
flow~\cite{goubault2011fluctuat}.
These systems propagate local arithmetic facts through larger
computations. FloatLib supplies such facts for its numerical families;
it does not include an automatic whole-program roundoff analysis.

SymFPU implements SMT floating-point operations using bit-vector operations,
with interfaces that admit concrete execution or symbolic encodings for
bit-blasting~\cite{brain2019symfpu}.
SymFPU connects floating-point semantics to a solver's bit vectors;
FloatLib connects packed operations to Lean implementations and numerical
theorems. Our Z3 comparisons test selected SMT floating-point cases.

For verified interval computation, the desired conclusion is containment.
Melquiond develops verified software arithmetic for numerical proofs in
Coq~\cite{melquiond2012interval}; later work shows how sufficient enclosure
properties can replace correctly rounded directed
operations~\cite{martindorel2023enabling}.
Faissole, Geneau de Lamarlière, and Melquiond verify an optimized
exponential enclosure, including its tables and bit manipulations, and
integrate it into CoqInterval~\cite{faissole2024approximation}.
Containment is a different contract from correct scalar rounding.
FloatLib similarly distinguishes its proved rational exp/log enclosures
from executable binary transcendental approximations. The interval
studies' whole-proof timings and our scalar-operation timings measure
different uses of verified arithmetic.

\para{Connections within Lean.}
FloatLib connects to Lean's logical \lean{Float} and \lean{Float32}
models~\cite{lean-v4340-float-model}. The bridges cover integer
constructors, finite-input addition and subtraction, and square root
with NaN canonicalization (\Cref{sec:interoperation}). They transfer
numerical facts through the logical models while retaining the
assumptions of native execution.
DSLean translates external domain-specific representations into typed Lean terms.
Its Gappa case study reconstructs proofs for exact real interval
arithmetic~\cite{rowney2026dslean}.
Supporting rounded operations in that translation also requires the
corresponding arithmetic theorems.
\section{Conclusion}\label{sec:conclusion}

FloatLib provides arbitrary-precision floating-point arithmetic in Lean
with user-defined formats and rounding rules. Its certified execution
interface connects optimized software kernels to complete encoded
specifications, with numerical theorems establishing rounding, error
bounds, and exactness under their stated assumptions. Shared rounding
invariants and quotient checks allow implementations to avoid unnecessary
computation without changing their results. Proofs about a numerical
program can therefore use the same arithmetic specification as its
implementation is optimized.

The library, proofs, benchmarks, evaluation data, and guide are available
as open source at
\url{https://github.com/lean-dojo/FloatLib}.

\bibliography{refs}
\label{paper:main-end}

\appendix
\Needspace{8\baselineskip}
\section{Execution refinement invariants}
\label{app:execution}

The execution proof in \Cref{sec:execution} proceeds from integer and limb
invariants through packing to complete encoded operations.
\Cref{tab:execution-declarations} groups the Lean declarations by these
proof stages. We retain the notation $J_j$ for jamming and $E_s$
for nearest-even integer rounding. The parameter $p$ denotes stored
fraction width, so normal significands have $p+1$ bits.

\begin{table}[!tb]
\centering
\small
\setlength{\tabcolsep}{4pt}
\renewcommand{\arraystretch}{1.08}
\caption{Execution proof stages and their Lean declarations.
Names are relative to the namespace in each shaded heading.}
\label{tab:execution-declarations}
\begin{tabularx}{\linewidth}{@{}
  >{\raggedright\arraybackslash}p{.29\linewidth}
  >{\raggedright\arraybackslash}X@{}}
\toprule
\textbf{Proof step} & \textbf{Declaration} \\
\midrule
\rowcolor{black!6}
\multicolumn{2}{@{}l@{}}{\strut\textit{Shared invariants}\quad
  \path{FloatLib.Numerics}} \\
Preserve rounding after jamming &
  \path{roundShiftRightEven_shiftRightJam} \\
Account for the subtraction borrow &
  \path{shiftRightJam_mul_two_pow_sub}\newline
  \path{LimbArray.subLoop_spec} \\
\midrule
\rowcolor{black!6}
\multicolumn{2}{@{}l@{}}{\strut\textit{Wide-limb arithmetic}\quad
  \path{FloatLib.Floats.Formats.BinaryInterchange.Model.WideLimb}} \\
Normalize and pack the result &
  \path{round_eq_of_roundJammed}\newline
  \path{roundNormal?_map_toModel} \\
Combine alignment branches &
  \path{alignAndRound?_refines} \\
\midrule
\rowcolor{black!6}
\multicolumn{2}{@{}l@{}}{\strut\textit{Quotient validation and rounding}\quad
  \path{FloatLib.Numerics.FixedWord}} \\
Check the quotient and remainder &
  \path{CertifiedDivision.certificate_sound}\newline
  \path{CertifiedDivision.certificate_complete} \\
Prove the fallback and its capacity &
  \path{RestoringQuotient.quotientSteps128_spec}\newline
  \path{CertifiedDivision.checkedCandidate_sound} \\
Round the exact quotient &
  \path{CertifiedDivision.roundQuotient_toNat} \\
\midrule
\rowcolor{black!6}
\multicolumn{2}{@{}l@{}}{\strut\textit{Encoded operations and real interpretation}\quad
  \path{FloatLib.Floats}} \\
Certify complete subtraction &
  \path{Formats.BinaryInterchange.Configured.Plan.wideLimbSub} \\
Preserve the specification through selection &
  \path{ExecFloat.Backend.selectCertified_run_eq_spec} \\
Apply the real FMA theorem &
  \path{Formats.BinaryInterchange.Model.toReal_fma_eq_roundAt} \\
\bottomrule
\end{tabularx}
\end{table}

\subsection{Jamming, normalization, and packing}

To prove the suffix invariant in \Cref{thm:sticky}, decompose, for $k>0$,
\[
 x=2^j(A2^k+B)+v,\qquad 0\le B<2^k,\quad 0\le v<2^j.
\]
If $v=0$, jamming returns $A2^k+B$. Otherwise it returns
$A2^k+(B\mathbin{\mathrm{OR}}1)$. Since $k>0$, setting bit zero leaves
the suffix below $2^k$. In either case the quotient is $A$, and the
suffix is zero exactly when $B=v=0$. This proves the
quotient and suffix identities together; taking the quotient modulo two
also proves the positive-index bit identity. At the guard position
$k=s-j-1$, the condition $j+2\le s$ makes this argument applicable.
It is a uniform separation condition: for every $j\ge1$,
$x=2^{j+1}+1$ gives $J_j(x)=3$ and
$E_1(3)=2\ne E_{j+1}(x)=1$. Particular inputs can still agree without
separation; in particular, $J_0$ is the identity.

An operation must also preserve the exponent used for packing.
For $x\ge2^{j+1}$, the truncated quotient is at least two, and
$\lfloor\log_2 J_j(x)\rfloor=\lfloor\log_2x\rfloor-j$.
Indeed, low-bit insertion either leaves the quotient unchanged or
increments an even quotient. The latter cannot cross a power-of-two
boundary, since the integer immediately below that boundary is odd.

Write $L=\lfloor\log_2x\rfloor$. The stronger bound
$x\ge2^{p+j+2}$ ensures $L-j\ge p+2$, so the jammed rounder's shift is
$(L-j)-p=(L-p)-j$. Its rounded significand therefore equals
$m=E_{L-p}(x)$ by \Cref{thm:sticky}.
Write $w$ for exponent width and $h=2^{w-1}-1$ for conventional bias.
For an IEEE-style descriptor eligible for the wide-limb backend,
$1+w+p>128$ and $w\le32$. The unsigned argument \lean{scale},
denoted $\sigma$, represents the unrounded magnitude
$x\,2^{\sigma-2(h+p-1)}$, with exponent subtraction in $\Z$.
The jammed rounder retains this argument and adds $j$ when computing
the packing position: $(L-j)+j+\sigma=L+\sigma$.
Write $\rho=L+\sigma$ for this common position.
Their common significand satisfies $2^p\le m\le2^{p+1}$, so the
normalization carry is exactly $c=\mathbf{1}[m=2^{p+1}]$.
The normal rounder accepts exactly when
\begin{equation}\label{eq:appendix-normal-range}
 h+2p-1\le \rho,\qquad \rho+c\le3h+2p-2.
\end{equation}
The packed fields are $e=\rho+c-(h+2p-2)$ and $f=m\bmod2^p$.
The acceptance bounds give $1\le e\le2h=2^w-2$ and $0\le f<2^p$.
Since $w\le32$, conversion of $e$ to \lean{UInt32} is exact,
and taking the low $p$ limb bits computes $f$ exactly.
If $c=0$, then $f=m-2^p$. If $c=1$, then $m=2^{p+1}$ and $f=0$:
the increased exponent accounts for renormalization.
These bounds and field equations identify the packed result.
The natural-number theorem separately handles
$j=0$ with only $x\ne0$, including left normalization when $L<p$.
The limb refinement then gives the same model result for an array
denoting $J_j(x)$ whenever the normal branch accepts. Declined results
are handled by the complete operation.

\subsection{Limb subtraction and cancellation}

Let $a,b\ge2^p$, $s_b\le s_a$, $d=s_a-s_b$, and
$\ell_a=\lfloor\log_2a\rfloor$, $\ell_b=\lfloor\log_2b\rfloor$.
In the opposite-sign compressed branch, $d\ge3$ and
$\ell_b+2\le\ell_a+d$. Put $j=d-3$, $T=a2^d-b$, and
$b=Q2^j+r$ with $0\le r<2^j$. The guard supplies two distinct bounds:
\begin{equation}\label{eq:appendix-borrow-bounds}
 \begin{aligned}
 T&\ge2^{\ell_a+d-1}\ge2^{p+j+2},\\
 Q&<2^{\ell_a+2}\le4a,\qquad Q+1\le8a.
 \end{aligned}
\end{equation}
The first supplies the rounding separation. The second makes the
subtraction of $Q$ and the incoming borrow $\mathbf{1}[r\ne0]$
nonnegative. It follows by dividing
$b<2^{\ell_a+d-1}=2^{\ell_a+2}2^j$.

To justify that subtraction for any limb count, set $\beta=2^{32}$.
Zero-extend the shorter array. For digits $u_i,v_i<\beta$ and incoming
borrow $b_i\in\{0,1\}$, the loop computes
\[
 \begin{aligned}
 t_i&=u_i+\beta-v_i-b_i,\quad
 z_i=t_i\bmod\beta,\quad b_{i+1}=\mathbf{1}[t_i<\beta],\\
 u_i+\beta b_{i+1}&=z_i+v_i+b_i.
 \end{aligned}
\]
Here $0\le t_i<2\beta<2^{64}$, so the 64-bit intermediate is exact.
The induction preserves nonnegativity of the unprocessed suffix:
if $V+b_i\le U$ and $U=u_i+\beta U'$, $V=v_i+\beta V'$, the
displayed identity and $z_i<\beta$ imply $V'+b_{i+1}\le U'$.
After the last digit this forces the final borrow to vanish.
Consequently, for arrays denoting $U,V$ and borrow $b_0\le1$,
the result denotes $U-V-b_0$ whenever $V+b_0\le U$.
The allocated length is the maximum of the two input lengths; neither
equal lengths nor a fixed upper bound on $U,V$ is required.

Apply this contract with $U=8a$, $V=Q$, and $b_0=\mathbf{1}[r\ne0]$.
The complemented suffix $2^j-r$ in the inexact case, as derived in
\Cref{sec:math-sticky}, explains why the borrow precedes low-bit
insertion. The resulting limb value is exactly $J_j(T)$.
The positive minuend ensures a nonempty result array, which is the
storage premise for inserting the low bit.

When the dominance guard fails, alignment forms $a2^d$ exactly and
compares it with $b$ before subtracting. This retains arbitrarily deep
cancellation, including exact equality, which returns positive zero for
opposite signs. For an eligible descriptor, the accepted-branch theorem
permits arbitrary scales and signs and imposes only $a,b\ge2^p$ on the
magnitudes. Thus an exact FMA product $m_xm_y\ge2^{2p}\ge2^p$ meets the
same contract even though it can exceed the stored significand width.
Its limb multiplication and alignment introduce no intermediate rounding.

\subsection{Quotient certification and repair}

Put $M=2^{128}$. The checker takes $n,d,q,r<M$, with $64<p\le126$
and $s\in\{p,p+1\}$. Its soundness and completeness theorems give
acceptance exactly when $qd+r=n2^s$ and $r<d$.
They require neither normalized inputs nor the comparison-directed
choice of $s$. The widening lemma uses $64<s<128$; in particular,
$n2^s<2^{128+s}\le2^{255}<M^2$.

If $u,c$ denote the value and carry of the four-word sum, multiplication
and addition refinement give $u+cM^2=qd+r$. The soundness proof uses
the checked conjunct $c=0$ to recover exact addition. Nevertheless,
the argument widths alone imply
\begin{equation}\label{eq:appendix-checker-capacity}
 qd+r\le(M-1)^2+(M-1)=M^2-M<M^2.
\end{equation}
The carry test is therefore redundant for these types, including
rejected candidates; it does not exclude any representable Euclidean
decomposition. The remainder bound implies $d>0$, and uniqueness of
Euclidean division identifies $q$ and $r$.

Repair needs the stronger domain $2^p\le n,d<2^{p+1}$, with $s=p$
when $d\le n$ and $s=p+1$ otherwise. It starts from
$q_0=\mathbf{1}[d\le n]$, $r_0=n-q_0d$. Normalization gives $n<2d$,
so $r_0<d$. After $i$ steps, the loop invariant is
\begin{equation}\label{eq:appendix-restoring-invariant}
 q_i d+r_i=n2^i,\qquad r_i<d,\qquad
 (q_i+1)2^{s-i}\le M\quad(0\le i\le s).
\end{equation}
Initially $(q_0+1)2^s=2^{p+1}\le M$.
A step sets $b=\mathbf{1}[2r_i\ge d]$,
$q_{i+1}=2q_i+b$, and $r_{i+1}=2r_i-bd$.
Because $d<2^{127}$, doubling the remainder fits in 128 bits.
If a step remains, the capacity invariant implies $2q_i+1<M$;
the inequality $q_{i+1}+1\le2(q_i+1)$ then preserves capacity for
the remaining steps. At termination the pair satisfies the checker.
The runtime selects this proved repair directly after rejection,
without executing a second check.

The quotient-rounding theorem requires the strict increment bound
$q+1<M$, together with $qd+r=N$, $r<d$, and $d<2^{127}$,
for arbitrary natural $N$.
These bounds justify exact doubling of $r$ and incrementing $q$.
The normalized caller supplies them because its scaled quotient lies
in $[2^p,2^{p+1})$, giving $q+1\le2^{p+1}\le2^{127}<M$.
The comparisons of $2r$ with $d$, including quotient parity at equality,
then compute exact nearest-even rounding of $N/d$.

\para{Encoded refinement.}
For an eligible implementation, let $D,C$ be the codec's decoding and
encoding maps, $S$ the model specification, and $K$ the complete binary
operation. Combining the accepted-branch proof with the reference branch gives
$D(K(x,y))=S(Dx,Dy)$; hence
$K(x,y)=C(D(K(x,y)))=C(S(Dx,Dy))$.
This is equality of encoded results, including the specified zero signs
and exceptional encodings. The reference baseline executes its supplied
specification, so its refinement certificate is reflexive; selection
preserves that same specification. Numerical correctness comes from
the separate model theorem. In particular, the real FMA corollary uses
an IEEE-style descriptor, finite operands, and a finite result, whereas
the complete encoded equality has no operand restrictions.

\subsection{Benchmark fixtures and wrapper exclusion}
\label{app:evaluation-details}

The scalar experiment in \Cref{sec:evaluation} prepares sixteen tuples.
For $0\leq i<16$, define
\begin{equation}\label{eq:benchmark-fixtures}
 r_s(i)=\frac{((is+s+1)\bmod113)+7}{((11i+s)\bmod29)+32}.
\end{equation}
The first operand uses $r_{37}(i)$, negated when $i\bmod5=0$;
the second uses $r_{61}(i)$, negated when $i\bmod3=0$.
Square root uses positive $r_{43}(i)$; FMA's third operand reverses the
first sequence. All inputs are rounded before timing.

The excluded Flocq~4.2.2 wrapper extracts \texttt{BinarySingleNaN}
arithmetic to OCaml/Zarith \cite{flocq422}.
It matches significand precision but uses exponent bound 16384 and
rounds fixture numerators and denominators separately. The first operand
is $-r_{37}(0)=-45/40$. At significand precision two, nearest-even rounding
gives $R_2(-45)=-48$, $R_2(40)=32$, and
\begin{equation}\label{eq:flocq-fixture-mismatch}
 R_2\bigl(R_2(-45)/R_2(40)\bigr)=-1.5
 \ne R_2(-45/40)=-1.
\end{equation}
This mismatch can change the subsequent dependent input sequence.
\section{Exact comparison and quire invariants}
\label{app:rounding}

\Cref{thm:posit-comparison} bounds the finite code search; comparing its
real target with a rational boundary needs a separate termination argument.

\subsection{Termination of enclosure comparison}

For intervals $I_k=[L_k,U_k]$ with rational endpoints and a boundary
$t\in\Q$, the comparison procedure finds
$k_*=\min\{k\in\N:U_k<t\ \lor\ t<L_k\}$ and returns $<$ or $>$,
respectively. The \lean{Nat.find} search takes an existence proof,
erased during compilation.

\begin{lemma}[Soundness and termination of enclosure comparison]
\label{lem:enclosure-comparison}
Suppose $L_k\leq y\leq U_k$ for every $k$, both endpoint sequences
converge to $y$, and $y\neq t$. Then $k_*$ exists and the returned
ordering is $\operatorname{cmp}(y,t)$.
\end{lemma}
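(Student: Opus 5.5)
The plan is to prove existence and soundness separately, splitting on the sign of $y-t$. Put $\varepsilon=|y-t|>0$, which is a positive real because $y\neq t$.

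For existence, suppose first that $y<t$. The upper endpoints $U_k$ converge to $y$, so there is an index $K$ with $|U_K-y|<\varepsilon$. Then
\[
U_K<y+\varepsilon=t,
\]
so the disjunction defining $k_*$ holds at $K$. If instead $t<y$, convergence of the lower endpoints gives some $K$ with $L_K>y-\varepsilon=t$. In either case the set of qualifying indices is nonempty. Well-ordering of $\N$ (in Lean, \lean{Nat.find} applied to this existence proof) then supplies the least such index $k_*$. Minimality is not needed for correctness; it only fixes which index is returned.

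For soundness, let $k=k_*$, so $U_k<t$ or $t<L_k$. The procedure checks the first disjunct first, so we follow that order.
\begin{itemize}
\item If $U_k<t$, containment gives $y\leq U_k<t$, hence $\operatorname{cmp}(y,t)$ is $<$, which is the value returned.
\item Otherwise $t<L_k\leq y$, hence $\operatorname{cmp}(y,t)$ is $>$, again matching the output.
\end{itemize}
The two disjuncts cannot both hold at once, since together with containment they would give $t<L_k\le y\le U_k<t$. So the order in which the procedure tests them does not matter.

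The main obstacle is only bookkeeping. The endpoints are rational but the convergence is in $\R$, so the comparisons must be stated after casting $L_k,U_k,t$ to $\R$. The resulting strict inequality is a rational inequality and is therefore decidable, which is what allows the search to run. No nesting of the intervals and no convergence rate is used, consistent with \Cref{fig:posit-enclosure}.
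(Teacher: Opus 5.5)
Your proof is correct and follows the paper's argument: convergence of the relevant endpoint gives a successful test, and containment at the first successful index yields the reported order. Your choice $\varepsilon=|y-t|$ with a strict bound, instead of the paper's $(t-y)/2$, and your remark that the two tests cannot both succeed, are inessential variations.
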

\begin{proof}
If $y<t$, set $\varepsilon=(t-y)/2>0$. Convergence of the upper
endpoints eventually gives $|U_k-y|<\varepsilon$, hence
$U_k<y+\varepsilon<t$. If $t<y$, convergence of the lower endpoints
similarly gives $t<L_k$. Thus one of the tests eventually succeeds.
At the first successful test, containment proves the reported order:
$y\leq U_{k_*}<t$ or $t<L_{k_*}\leq y$.
\end{proof}

Neither nested intervals nor a known convergence rate is required.
Equality needs a separate decision: the intervals
$[1-2^{-k},1+2^{-k}]$ converge to the boundary $t=1$ but never
separate it. Each interval admits targets on both sides of $t$, so a
fixed refinement limit cannot decide equality.

For rational $q\neq0$, irrationality of $\exp q$ excludes equality with
rational boundaries; for $\log q$ we need $q>0$ and $q\neq1$. Together
with endpoint convergence at degrees $2^k$ and direct comparison of
$\exp0=1$ and $\log1=0$, this yields total comparators.
Within \path{FloatLib.Numerics.Enclosure},
\path{Comparison.exists_separating} proves termination and
\path{Comparison.compare_eq_real} proves soundness.
The lemmas \path{exists_exp_separating} and \path{exists_log_separating}
specialize termination to the exponential and logarithm.
The resulting comparison equation supplies the hypothesis of
\Cref{thm:posit-comparison}.

\subsection{Enclosure reuse}

Rounding reuses Taylor enclosures across boundary queries for the same
target. We prepare a finite array of lazy enclosures outside the boundary
function, so the queries share evaluated entries.
For every cache length $c$, \lean{cacheIntervals\_apply} proves
$\operatorname{get}(\operatorname{cache}(I,c),k)=I_k$ for all $k$:
entries with $k\geq c$ call the original generator, so the cache length
does not limit refinement.

For a positive rational threshold $t$,
$\exp q<t\Longleftrightarrow q<\log t$.
We reverse an exact comparison of $\log t$ with $q$ to avoid constructing
a potentially enormous exponential; nonpositive thresholds are decided
immediately because $\exp q>0$. The prepared comparator uses direct
exponential enclosures for moderate arguments and the logarithmic route
otherwise. Both satisfy the same comparison equation.
The scalar timing experiment excludes these elementary-function
implementations.

\subsection{Quire accumulation from a partial sum}

For a standard width $n\geq2$, let an ordinary quire coefficient $A$
denote $v=A\Delta_n$, with $\Delta_n=2^{16-8n}$.
Suppose $m,k\in\N$, $|A|\leq mB_n$, and $m+k<2^{31}$,
where $B_n=2^{16n-32}$ and $m$ counts the product bounds already consumed.
For $k$ remaining ordinary operand pairs with exact products
$p_0,\ldots,p_{k-1}$, the fold continuation represents
$v+\sum_{i<k}p_i$. After $j\leq k$ updates, the induction invariant gives
the exact value $v+\sum_{i<j}p_i$ and a coefficient $A_j$ satisfying
\[
 |A_j|\leq(m+j)B_n<2^{31}B_n=2^{16n-1}.
\]
The initial state satisfies this invariant by assumption.
The triangle inequality advances the first bound, and
$m+j\leq m+k<2^{31}$ gives the second. Every state is ordinary, so the
one-step theorem preserves exactness. The model theorem
\lean{toRat?\_foldMulAdd} formalizes this induction;
$A=v=m=0$ recovers \Cref{thm:quire}.

Ordinary coefficients satisfy $-2^{16n-1}<A<2^{16n-1}$ because the
most-negative two's-complement word is NaR. Maximum-magnitude products
attain $B_n$ quire units, so $2^{31}$ same-sign maximum products reach a
boundary: the positive coefficient overflows, and the negative
coefficient is the reserved NaR word. The update returns NaR in both cases.
The strict bound covers every operand ordering, although cancellation can
allow longer sums.

At the configured interface,
\path{ExecFloat.Posit.Quire.toRat?_foldl_qMulAdd_zero} transfers the fold
theorem through the operand codec. Its successful rational decoding
identifies the exact product sum and excludes NaR.
The model theorem \lean{qToP\_eq\_roundRat} and codec round trip give one
rounding of the sum;
\lean{RealRounding.round\_ratCast} identifies it with $\Pi_n$.

\end{document}